\documentclass[a4paper,UKenglish,cleveref,autoref,thm-restate,numberwithinsect]{lipics-v2021}

\nolinenumbers

\usepackage{amsmath,amssymb,amsfonts}
\usepackage{xspace}
\usepackage{xcolor}
\usepackage{textcomp} 
\usepackage[most]{tcolorbox}

\newcommand{\ia}{\textit{i}}
\newcommand{\ib}{\textit{ii}}
\newcommand{\ic}{\textit{iii}}
\newcommand{\iiv}{\textit{iv}}

\newcommand{\calN}{\mathbb{N}}
\newcommand{\calA}{\mathcal{A}}

\newcommand{\calS}{\mathcal{S}}

\newcommand{\calB}{\mathcal{B}}

\newcommand{\calF}{\mathcal{F}}

\newcommand{\andy}[1]{\textcolor{red}{}}
\newcommand{\udi}[1]{\textcolor{blue}{}}

\newcommand{\temph}[1]{\textbf{#1}}

\newcommand{\mypara}[1]{\smallskip\noindent\textbf{#1.}}

\title{Volition-Guarded Multiagent Atomic Transactions: Describing People and their Machines}
\titlerunning{Describing People and their Machines}

\author{Andy Lewis-Pye$^1$ \& Ehud Shapiro$^{1,2}$}{$^1$London School of Economics, UK, and $^2$Weizmann Institute of Science, Israel}{{a.lewis7,e.shapiro2}@lse.ac.uk}{}{}

\authorrunning{Lewis-Pye and Shapiro}

\Copyright{Andy Lewis-Pye and Ehud Shapiro}

\hideLIPIcs
\ArticleNo{0}

\ccsdesc[500]{Theory of computation~Distributed computing models}
\ccsdesc[500]{Theory of computation~Concurrency}
\ccsdesc[300]{Theory of computation~Operational semantics}
\ccsdesc[300]{Computer systems organization~Peer-to-peer architectures}

\keywords{Grassroots Protocols, Multiagent Transition Systems, Atomic Transactions, Liveness, Social Networks, Grassroots Coins and Bonds, GLP}

\begin{document}

\maketitle

\begin{abstract}
Formal models for concurrent and distributed systems describe machines; the people who operate them are either ignored or treated as external environment. Yet, key distributed systems---notably grassroots platforms---include people operating their personal machines (smartphones), and their faithful description must include the states of both people and machines and how they jointly effect system behaviour. 

Here, we propose volition-guarded multiagent atomic transactions---executed atomically by machines and guarded by their people's volitions---as a novel mathematical foundation for specifying systems consisting of people operating machines. 
Each agent's state consists of a volitional state and machine state; a transaction is enabled when the machine precondition holds and the guarding persons are willing.
For example, befriending two people is guarded by both; unfriending, by either; voluntary swap of coins and bonds is guarded by both parties, while a payment is guarded by the payer.

We develop the mathematical machinery to express safety and liveness of platforms specified in this framework, to implement one platform by another, and for an implementation to be resilient to faults; and provide example specifications of two grassroots platforms: social networks, and coins and bonds.  These specifications are then used by AI to derive working implementations.
We employ here a novel and simpler definition of `grassroots' that better captures the informal notion---multiple instances can form and operate independently, yet may coalesce---and show that the platforms specified here are grassroots under the new definition.  We further introduce \emph{volitionally grassroots} protocols, in which two groups can become connected only by mutual consent---the first transaction coupling them must be willed by a member of each---and show that both platforms are volitionally grassroots.
\end{abstract}

\section{Introduction}

\mypara{The gap} Formal models for concurrent and distributed systems describe machines; the people who operate them are either ignored or treated as external environment (see Section~\ref{section:related-work}).  Across all surveyed formal traditions---Turing's choice machines~\cite{turing1936computable}, CSP~\cite{hoare1985communicating}, CCS~\cite{milner1980calculus}, I/O automata~\cite{lynch1989introduction}, angelic/demonic nondeterminism~\cite{back1998refinement}, ATL~\cite{alur2002alternating}, game semantics~\cite{abramsky2000full}, ceremony analysis~\cite{ellison2007ceremony}, electronic institutions~\cite{esteva2001formal}---the person is modelled as environment, opponent, error source, or unconstrained nondeterministic process, but always as an entity external to the agent, never as a formal component of the agent's state.

\mypara{Proposal}
Here, we propose \emph{volition-guarded multiagent atomic transactions}---executed atomically by machines and guarded by their people's volitions---as a formal foundation for specifying systems consisting of people operating machines.  Each agent's state decomposes into a \emph{machine state} and a \emph{volitional state}; a transaction is enabled when the machine precondition holds and the guarding persons are willing.  Volitions are thus persistent, inspectable agent state rather than point-of-choice nondeterminism, environment, or opponent moves as in prior formal traditions.

\mypara{Example: Grassroots social graphs}  Each agent $p$ maintains a set of friends $c_p \subseteq P$.  The social graph evolves via befriending and unfriending, specified as volition-guarded transactions:
\begin{tcolorbox}[colback=gray!5!white,colframe=black!75!black,top=2pt,bottom=2pt]
\begin{enumerate}
    \item \textbf{Befriend}: $c'_p := c_p \cup \{q\}$, $c'_q := c_q \cup \{p\}$, provided $q \notin c_p$.  Guarded by $\{p, q\}$.
    \item \textbf{Unfriend}: $c'_p := c_p \setminus \{q\}$, $c'_q := c_q \setminus \{p\}$, provided $q \in c_p$.  Guarded by $p$ or $q$.
\end{enumerate}
\end{tcolorbox}
\noindent Befriending requires both persons to be willing; unfriending can be initiated by either.  The guard determines which persons must be willing for a machine transaction to proceed.  This distinction is the essence of volition-guarded multiagent atomic transactions.  Further examples---including child-safe social networking and coins and bonds---appear in Section~\ref{subsection:examples}.

\mypara{Motivating domain: grassroots platforms} Volition-guarded multiagent atomic transactions are a general framework for systems of people operating machines.  An example of a class of such systems, which motivated this work, is grassroots platforms~\cite{shapiro2023grassrootsBA,shapiro2023gsn,shapiro2024gc,shapiro2025GF}, which aim to offer an egalitarian alternative to global platforms, centralized and decentralized alike.   Grassroots platforms consist of people operating their personal machines (smartphones), and can have multiple instances that emerge and operate independently of each other and of any global resource except the network, yet can interoperate and coalesce once interconnected.  Key grassroots platforms include grassroots social networks~\cite{shapiro2023gsn,shapiro2026volitional}, grassroots coins~\cite{shapiro2024gc,lewis2023grassroots} and bonds~\cite{shapiro2026bonds}, and grassroots democratic communities~\cite{halpern2024federated,shapiro2025GF,keidar2025constitutional}.  No platform that operates on a shared global resource---a replicated ledger (Blockchain~\cite{bitcoin}), a distributed data structure (IPFS~\cite{benet2014ipfs}, DHT~\cite{maymounkov2002kademlia}), or a distributed pub/sub system with a global directory~\cite{chockler2007constructing,chockler2007spidercast,buchegger2009peerson})---is grassroots.  Mastodon~\cite{raman2019challenges} is peer-to-peer among servers not people; BitTorrent~\cite{torrentfreak2021bittorrent} requires reaching a tracker or DHT boot nodes to join a swarm; Scuttlebutt~\cite{tarr2019secure,kermarrec2020gossiping} is grassroots in design even if not formally proven as such.  

An earlier paper~\cite{shapiro2025atomic} introduced multiagent atomic transactions for grassroots platforms and presented transactions-based specifications of social graphs, cryptocurrencies, and democratic federations; that work did not address the role of people, transaction equivalence, or liveness, and employed the original definition of grassroots protocols. 

\mypara{Framework overview} We develop the mathematical machinery needed to express volition-guarded multiagent atomic transactions and the safety and liveness of grassroots platforms they specify, to implement one platform by another, and for an implementation to be resilient to faults.  We introduce \emph{transaction equivalence classes}, which provide a natural notion of ``same action'' across configurations, and define liveness in terms of such classes: a run is correct iff every enabled class is eventually taken.  We demonstrate the framework with two grassroots platforms:
\begin{enumerate}
\item \textbf{Grassroots social networks}~\cite{shapiro2023gsn,shapiro2026volitional} let people maintain their friendship connections through local storage and peer-to-peer relationships without central control, with the social graph evolving through befriending and unfriending.
\item \textbf{Grassroots coins and bonds}~\cite{shapiro2024gc,lewis2023grassroots,shapiro2026bonds} let each person mint their own coins, backed by the goods and services they offer, and exchange them with others via atomic swaps; bonds extend coins with a maturity date, enabling interest-bearing credit, loans, and the full gamut of financial instruments.
\end{enumerate}
The specifications presented here have been used by AI to derive working implementations in GLP, a grassroots multiagent concurrent logic programming language~\cite{shapiro2025glp}, as reported in companion papers~\cite{shapiro2026implementing,shapiro2026volitional,shapiro2026bonds,shapiro2026types}.  We employ here a novel---simpler than the original~\cite{shapiro2023grassrootsBA,shapiro2025atomic}---definition of grassroots protocols based on interleavings of correct runs, that better captures the informal notion of grassroots, and show that the platforms specified here are grassroots under the new definition.  We strengthen this to \emph{volitionally grassroots} protocols, in which two groups can become connected only by the mutual consent of a member of each---the first transaction coupling them is guarded on both sides---and prove both platforms volitionally grassroots.  We also prove that Bitcoin, distributed hash tables, and similar systems are not grassroots.

\mypara{Paper outline} 
Section~\ref{section:dts} presents examples of grassroots platforms specified by volition-guarded transactions, and introduces the formal framework: volition-guarded multiagent atomic transactions and the volitional transactions they induce, transaction equivalence, and liveness.
Section~\ref{section:gs-protocols} defines protocols and grassroots protocols via interleavings of correct runs, gives a syntactic guard condition under which a transactions-based protocol is grassroots, and defines volitionally grassroots protocols, in which every first interaction between two groups is willed by a member of each.
Section~\ref{section:platforms} presents two grassroots platforms (social networks, coins and bonds), proves their safety properties, and proves that they are volitionally grassroots.
Section~\ref{section:related-work} discusses related work.
Section~\ref{section:conclusion} concludes and discusses future work.
The appendix contains proofs that Bitcoin, distributed hash tables, and IPFS are not grassroots (Appendix~\ref{appendix:not-grassroots}), notes on implementation (Section~\ref{section:implementation}), and a detailed survey of formal models of persons in concurrent systems (Appendix~\ref{appendix:persons}).

\section{Volitional Multiagent Transition Systems}\label{section:dts}

Earlier work introduced multiagent transition systems~\cite{shapiro2021multiagent}, grassroots protocols and platforms~\cite{shapiro2023grassrootsBA}, and their definition via multiagent atomic transactions~\cite{shapiro2025atomic}---capturing the behaviour of machines but not of the people operating them.  Here we develop the mathematical machinery to describe agents consisting of a person and a machine, and the volitional multiagent atomic transactions they execute.

A volitional transaction is a ``regular'' multiagent atomic transaction---henceforth, a \emph{machine transaction}---guarded by the volitions of some, all, or none of the people whose machines participate; a person's volitional state is a set of equivalence classes of machine transactions they are willing their machine to participate in.  The social graph illustrates the two extremes: befriending is guarded by both $p$ and $q$ (the class of the `befriend $p$ and $q$' machine transaction must be in both volitions), while unfriending is guarded by either.  A person may freely change their volitional state via change-volition transactions; additionally,  an equivalence class is removed from every agent's volitional state when a machine transaction in this class is taken, fulfilling the will.

The definitions of volitional transactions and agent states are mutually-recursive: a volitional transaction is a machine transaction guarded by volitions, which are themselves sets of equivalence classes of machine transactions.
We resolve this circularity bottom-up: first machine transactions, then their equivalence, then agent states (including volitional states), and finally volitional transactions.

\subsection{Agents, People, Machines, Volition-Guarded and Volitional Transactions}\label{subsection:at}

We assume a potentially infinite set of \emph{agents} $\Pi$, but consider only finite subsets of it, so when referring to a particular set of agents $P \subset \Pi$ we assume $P$ to be nonempty and finite.  We use $\subset$ to denote the strict subset relation and $\subseteq$ when equality is also possible, and use $p\ne q \in P$ as a shorthand for $p\in P \wedge q\in P \wedge p\ne q$.
As standard, we use $S^P$ to denote the set of all total functions from $P$ to $S$, and if $c\in S^P$ we use $c_p$ (instead of $c(p)$) to denote the value of $c$ at $p\in P$.

\begin{definition}[Machine State, Configuration, Transaction, Volition-Guarded Multiagent Atomic Transaction]\label{definition:mt} 
Given an arbitrary set $S$ of \temph{machine states}, with a designated \temph{initial state} $s_0 \in S$, and agents $Q \subset \Pi$, a \temph{machine configuration} over $Q$ is a member of $S^Q$, and a \temph{machine transaction} over \temph{participants} $Q$ is a pair $c\rightarrow c' \in (S^Q)^2$ such that $c\ne c'$. Given such a machine transaction $t$, a 
\temph{volition-guarded multiagent atomic transaction} over $t$---henceforth, \temph{volition-guarded transaction}---is a pair $(t,Q')$ where $Q'\subseteq Q$ are its \temph{guards}.
\end{definition}

Machine transactions are atomic and asynchronous~\cite{shapiro2021multiagent}---they can be carried out by their participants at any time, regardless of the states of non-participants.  Participants include both active agents (whose state changes) and stationary agents (whose state is a precondition but does not change).  Volition-guarded transactions are machine transactions that can be carried-out only if their guards $Q'\subseteq Q$ are willing.  Volition-guarded transactions do not distinguish between agents that initiate a transaction and those willing to participate in it.  When we say a transaction is ``guarded by $\{p,q\}$,'' both must be willing; when we say it is ``guarded by either $p$ or $q$,'' we mean there are two volition-guarded transactions over the same machine transaction, $(t,\{p\})$ and $(t,\{q\})$, so that either person's volition suffices.

Distinct machine transactions can represent ``the same action'' in different configurations; we capture this with an equivalence relation on machine transactions.
\begin{definition}[Transaction Equivalence]\label{definition:equivalence}
Given a set of machine transactions $R$, a \temph{transaction equivalence} is an equivalence relation $\sim$ on $R$ such that $t \sim t'$ implies $t$ and $t'$ have the same participants.  We write $[t]$ for the equivalence class of $t$ under $\sim$.
\end{definition}
For example, all befriend$(p,q)$ transactions---differing only in the configurations in which they occur---form an equivalence class.   Further examples for each platform appear in Section~\ref{section:platforms}.

Next we provide the mathematical machinery allowing people to express their volitions regarding the classes of machine transactions their machines may participate in.

\begin{definition}[Agent State and Configuration]\label{definition:agent-state}
Given agents $P$, states $S$ with initial state $s_0$, a set of machine transactions $T$ each over its own participants $Q\subseteq P$ and $S$, and equivalence $\sim$ on $T$, an \temph{agent state} is a pair $(V,m)\in \calA = (2^{T/\sim}  \times S)$ where  $V$ is its \temph{volitional state} and $m \in S$ its  \temph{machine state}.  The \temph{initial agent state} is $(\emptyset,s_0)$.  An \temph{agent configuration} $c$ over $P$, $S$, $T$, and $\sim$ is a member $c\in \calA^P$ in which $c^v_p \subseteq (T/{\sim})_p$ for every $p\in P$, where $(T/{\sim})_p$ denotes the classes in $T/{\sim}$ in which $p$ is a participant; we write $c^v_p$  for the volitional state and $c^m_p$ for the machine state of agent $p$ in $c$.
\end{definition}

\begin{definition}[Volitional Multiagent Atomic Transaction]\label{definition:vmat}
Given agents $P$, states $S$, machine transactions $T$ over $P$ and $S$, and equivalence $\sim$ on $T$:
\begin{enumerate}
    \item A \temph{change-volition transaction of agent $p\in P$} is a pair $c\rightarrow c'$ of agent configurations over $\{p\}$, $S$, $T$, and $\sim$ such that $c^v_p, c'{^v_p} \subseteq (T/{\sim})_p$ and $c^v_p \ne c'{^v_p}$, and $c^m_p = c'{^m_p}$.
    \item A \temph{volitional machine transaction} induced by a volition-guarded machine transaction $(t,Q')$, for some $t= (d\rightarrow d')\in T$ over $Q'\subseteq Q\subseteq P$, is a pair $c\rightarrow c'$ where $c\ne c'$ are agent configurations over $P$, $S$, $T$, and $\sim$ such that $[t]\in c^v_q$ for every $q\in Q'$; $c^m_p = d_p$ and $c'{^m_p} = d'_p$ for every $p\in Q$; $c^m_p = c'{^m_p}$ for every $p\in P\setminus Q$; and $c'{^v_p} = c^v_p \setminus \{[t]\}$ for every $p\in P$.
    \item A \temph{volitional multiagent atomic transaction} is a change-volition transaction or a volitional machine transaction. 
\end{enumerate}
\end{definition}
When a volitional machine transaction induced by $(t,Q')$ is taken, the class $[t]$ is removed from every agent's volitional state: the will is fulfilled by any equivalent transaction.  A person may independently change their volitional state via change-volition transactions, which may add or remove classes; beyond these, the framework removes a class from $c^v_p$ only upon fulfilment. 

\mypara{Sybils} Sybils~\cite{douceur2002sybil} arise when a person operates two or more identities. Defending against sybils is a broad research area~\cite{levine2006survey,alvisi2013sok}, which is largely algorithmic, often evaluated experimentally. A formal treatment of sybils requires a formal notion of a person and of the correspondence between persons and the identifiers they own~\cite{shahaf2020genuine}. The present framework offers a natural vehicle for formally studying sybils and sybil resilience: since an agent is a person operating a machine, identified by the public key of the keypair its machine holds, a sybil is a single person operating two or more agents under distinct keypairs, and absent sybils agents are in one-to-one correspondence with persons. Using the current framework to formally study sybil resilience is a subject of future work.

\subsection{Examples of Grassroots Platforms Specified by Volition-Guarded Multiagent Atomic Transactions}\label{subsection:examples}

A grassroots platform is specified by a set of volition-guarded transactions over a local-states function; the formal machinery for deriving a transition system and proving it grassroots is introduced in this section.  First, we present the volition-guarded transactions that specify several grassroots platforms, illustrating the range of guard structures that arise in practice.

\mypara{Child-safe social networks~\cite{shapiro2026volitional}}  A child-safe social network is a grassroots social network in which a child's partaking in online activities is subject to parental consent. Each agent $a$ maintains a set of friends $c_a\subseteq P$.  In what follows, $r$, $s$ are children with respective parents $p$, $q$ (where $r,s,p,q$ are four distinct agents with the stated precondition).
\begin{tcolorbox}[colback=gray!5!white,colframe=black!75!black,top=2pt,bottom=2pt]
\begin{enumerate}
    \item \textbf{Child befriend}: $c'_r := c_r \cup \{s\}$, $c'_s := c_s \cup \{r\}$, provided $q \in c_p$ and $s \notin c_r$.  Guarded by $\{r, s, p, q\}$.
    \item \textbf{Child unfriend}: $c'_r := c_r \setminus \{s\}$, $c'_s := c_s \setminus \{r\}$, provided $s \in c_r$.  Guarded by any one of $\{r, s, p, q\}$.
\end{enumerate}
\end{tcolorbox}
\noindent The precondition $q\in c_p$ requires the parents to be friends.  Child befriending requires all four---both children and both parents---to be willing; child unfriending can be initiated by any one of the four.  The parent--child assignment is fixed externally and not part of the agent state; the full formalisation appears in~\cite{shapiro2026volitional}. 

\mypara{Grassroots coins~\cite{shapiro2024gc}}  Grassroots coins are units of personal debt: each person mints their own coins, backed by the goods and services they offer, and liquidity arises from mutual credit via coin exchange among persons that know and trust each other.  Each agent $p$ maintains a multiset of coins $c_p$; we write $\text{\textcent}_r$ for a coin minted by $r$ (a \emph{$r$-coin}) and $\text{\textcent}_r^k$ for a multiset of $k$ such coins.  We write $\cup$ and $\setminus$ for multiset union and difference throughout.
\begin{tcolorbox}[colback=gray!5!white,colframe=black!75!black,top=2pt,bottom=2pt]
\begin{enumerate}
    \item \textbf{Mint}: $c'_p := c_p \cup \text{¢}_p^k$, $k>0$.  Guarded by $p$.
    \item \textbf{Voluntary swap}: $c'_p:= (c_p\cup  y) \setminus  x$, $c'_q:= (c_q\cup  x) \setminus  y$, provided $x \subseteq c_p$, $y \subseteq c_q$.  Guarded by $\{p,q\}$.
    \item \textbf{Pay}: $c'_p:= c_p \setminus  x$, $c'_q:= c_q\cup  x$, where $x$ is a set of $q$-coins, $x \subseteq c_p$.  Guarded by $p$.
    \item \textbf{Redeem}: $c'_p:= (c_p\cup  y) \setminus  x$, $c'_q:= (c_q\cup  x) \setminus  y$, where $x = \text{¢}_q^k \subseteq c_p$, $y \subseteq c_q$, $|y|=k$.  Guarded by $p$.
\end{enumerate}
\end{tcolorbox}
\noindent Minting is a personal decision; voluntary swaps require both parties to be willing; payments and redemptions are guarded by the payer/redeemer.  In redemption, the redeemer chooses any $k$ coins held by the issuer.  Grassroots bonds~\cite{shapiro2026bonds} extend grassroots coins with maturity dates, enabling interest-bearing credit and loans; they are formally specified in Section~\ref{section:platforms}.

\mypara{Summary}  The guard captures the essential distinction between voluntary and obligatory transactions.  Befriending and voluntary swaps are guarded by all participants---they require mutual willingness.  Unfriending, payments, and redemptions are guarded by a single party---they are obligatory once initiated.  Child befriending illustrates the most complex case: a quaternary guard requiring all four persons to be willing.  Unguarded transactions (guard $Q'=\emptyset$), such as group message delivery~\cite{shapiro2026volitional}, are purely mechanical and require no volitions.

\subsection{Volitional Multiagent Transition Systems}\label{subsection:mts}

The following is a simplified variation, sufficient for the purpose of this work, on the foundations introduced in~\cite{shapiro2021multiagent}.   

\begin{definition}[Transition System, Computation, Run, Safe, Live, Correct]\label{definition:ts}\label{definition:liveness}
A \temph{transition system} is a tuple $TS=(S,s_0,T,{\sim})$, where: 
\begin{enumerate}
    \item $S$ is an arbitrary non-empty set, referred to as the set of \temph{states}.
    \item Some $s_0\in S$ is the designated \temph{initial state}. 
    \item $T\subseteq S^2$ is a set of \temph{correct transitions over} $S$, where each transition $t\in T$ is a pair $(s,s')$ of non-identical states $s\ne s'\in S$, also written as $t=s\rightarrow s'$.
    \item $\sim$ is a \temph{partial equivalence relation} on $T$: a symmetric and transitive relation on $T$, not necessarily reflexive.  Its domain $\{t\in T : t\sim t\}$ is partitioned into \temph{liveness classes} $T/{\sim}$; a transition outside the domain belongs to no class.
\end{enumerate} 
A \temph{computation} of $TS$ is a (nonempty, potentially infinite) sequence of states $r= s_1,s_2,\cdots$; it is a \temph{run} of $TS$ if $s_1=s_0$.
A computation $r= s_1,s_2,\ldots$ is \temph{safe}, also written $r\subseteq T$, if $s_i\rightarrow s_{i+1}\in T$ for every two consecutive states; and $s\xrightarrow{*}s'\subseteq T$ denotes the existence of a safe computation from $s$ to $s'$ (empty if $s=s'$).
A class $[t]\in T/{\sim}$ is \temph{enabled} in a state $s$ if $s\rightarrow s'\in[t]$ for some $s'\in S$.
A run $r$ is \temph{live} if no class $[t]\in T/{\sim}$ is enabled in every state of some suffix of $r$ with no member of $[t]$ occurring in the suffix.  A run is \temph{correct} if it is safe and live.
\end{definition}
A partial equivalence, rather than a total one, is used so that some transitions may carry no liveness requirement at all: only transitions in the domain of $\sim$ form classes and thereby incur a liveness obligation, while transitions outside the domain---belonging to no class---may occur in a correct run but are never required to.

\begin{definition}[Multiagent Transition System]\label{definition:dts-cd}
Given agents $P \subset \Pi$ and an arbitrary set $S$ of \temph{states} with a designated \temph{initial state} $s_0\in S$, 
a \temph{multiagent transition system} over $P$ and $S$ is a transition system $TS= (C,c_0,T,{\sim})$ with \temph{configurations} $C:= S^P$, \temph{initial configuration}  $c_0:= \{s_0\}^P$, \temph{transitions} $T\subseteq C^2$ a set of transactions over $P$ and $S$, and $\sim$ a partial equivalence on $T$. 
\end{definition}
Unary multiagent transition systems were introduced in~\cite{shapiro2021multiagent} and were employed to define the notion of grassroots protocols~\cite{shapiro2023grassrootsBA} and to provide unary specifications for various grassroots platforms~\cite{shapiro2023gsn,shapiro2024gc,lewis2023grassroots}.
Here, we employ $k$-ary transition systems, for any $k\le |P|$, in which several agents can change their state simultaneously.

Rather than specifying a multiagent transition system over a set of agents $P$ directly, we specify it via 
machine transactions (Definition~\ref{definition:mt}).

A machine transaction over $Q\subseteq P$ defines a set of multiagent transitions over $P$ in which all members of $P\setminus Q$ are stationary:
\begin{definition}[Transaction Closure]\label{definition:closure}
Let $P\subset \Pi$, $S$ a set of machine states, and $C:=S^P$.
For any transition or transaction $t = c\to c'$, we write $t_q := c_q\to c'_q$ and say $p$ is \temph{stationary} in $t$ if $c_p = c'_p$.
For a machine transaction $t=(c\rightarrow c')$ over $S$ with participants $Q$, the \temph{$P$-closure of $t$}, $t{\uparrow}P$, is the set of transitions over $P$ and $S$ defined by:
$$
t{\uparrow}P := \begin{cases} \{ t' \in C^2  :
\forall q\in Q.(t_q = t'_q) \wedge \forall p\in P\setminus Q.(p\text{ is stationary in }t')\} & \text{if } Q\subseteq P \\
\emptyset & \text{otherwise}
\end{cases}
$$
If $R$ is a set of machine transactions, each $t\in R$ over some $Q$ and $S$, then the 
\temph{$P$-closure of $R$}, $R{\uparrow}P$, is the set of transitions over $P$ and $S$ defined by:
$$
R{\uparrow}P := \bigcup_{t\in R} t{\uparrow}P
$$
Given a relation ${\sim}$ on $R$, its \temph{$P$-closure} ${\sim}{\uparrow}P$ is the relation on $R{\uparrow}P$ with $\hat t \mathrel{({\sim}{\uparrow}P)} \hat t'$ iff $\hat t \in t{\uparrow}P$ and $\hat t' \in t'{\uparrow}P$ for some $t \sim t'$.
\end{definition}
Namely, the closure over $P\supseteq Q$ of a machine transaction $t$ over $Q$ includes all transitions $t'$ over $P$ in which members of $Q$ do the same in $t$ and in $t'$, and the rest remain in their current (arbitrary) state.  The closure likewise carries any relation on transactions to one on the induced transitions.  If distinct transactions in $R$ have disjoint $P$-closures---as when every participant of every transaction in $R$ changes state---then each transition in $R{\uparrow}P$ has a unique inducing transaction, ${\sim}{\uparrow}P$ relates two transitions exactly when their inducing transactions are $\sim$-related, and ${\sim}{\uparrow}P$ is a partial equivalence whenever ${\sim}$ is.

\begin{lemma}[Compositionality of Closure]\label{lemma:closure-compositional}
For sets of machine transactions $R, R'$ over $S$ and any $P\subset \Pi$: $(R\cup R'){\uparrow}P = R{\uparrow}P \cup R'{\uparrow}P$, and $R\subseteq R'$ implies $R{\uparrow}P \subseteq R'{\uparrow}P$.
\end{lemma}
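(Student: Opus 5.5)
The proof unfolds the definition of $R{\uparrow}P$ as a union indexed by $R$. No earlier lemma is needed; both claims are elementary facts about indexed unions.

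For the first claim, write
\[
(R\cup R'){\uparrow}P \;=\; \bigcup_{t\in R\cup R'} t{\uparrow}P .
\]
This union splits into $\bigcup_{t\in R} t{\uparrow}P \,\cup\, \bigcup_{t\in R'} t{\uparrow}P$, because a transition belongs to the left-hand union iff it lies in $t{\uparrow}P$ for some $t$ in $R$ or in $R'$. The right-hand side is exactly $R{\uparrow}P\cup R'{\uparrow}P$. Transactions in $R\cap R'$ cause no difficulty: they contribute the same set $t{\uparrow}P$ to both sides, and union is idempotent.

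For the second claim, I will derive monotonicity from the first. If $R\subseteq R'$ then $R'=R\cup R'$, so
\[
R'{\uparrow}P=(R\cup R'){\uparrow}P=R{\uparrow}P\cup R'{\uparrow}P\supseteq R{\uparrow}P .
\]
Alternatively, argue directly: any $\hat t\in R{\uparrow}P$ lies in $t{\uparrow}P$ for some $t\in R\subseteq R'$, and hence lies in $R'{\uparrow}P$.

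There is no real obstacle here. The only point to check is that $t{\uparrow}P$ depends on $t$ alone and not on the ambient set $R$, including the case split that sets $t{\uparrow}P=\emptyset$ when the participants of $t$ are not contained in $P$. Definition~\ref{definition:closure} fixes both facts, so the closure acts pointwise and the union manipulations are valid.
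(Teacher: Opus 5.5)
Your proposal is correct and matches the paper's proof: both claims follow immediately from unfolding the pointwise definition $R{\uparrow}P := \bigcup_{t\in R} t{\uparrow}P$ and splitting the indexed union. Your check that $t{\uparrow}P$ depends on $t$ alone, including the empty-closure case when the participants are not contained in $P$, is the only point that needs verifying, and you handle it.
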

\begin{proof}
Both are immediate from the definition $R{\uparrow}P := \bigcup_{t\in R} t{\uparrow}P$.
\end{proof}

A transaction and a transition are structurally identical---both are pairs of configurations---but differ in their role: a transaction is specified over its participants $Q$, the agents whose states are preconditions for the transaction to occur, and says nothing about agents outside $Q$; different transactions may have different sets of participants.  A transition, by contrast, 
is over a fixed set of agents $P$, as it is a building block of a transition system over $P$ that consists of transitions over $P$.  Given a set of transactions, each over its own set of participants, the closure operator induces from them a set of transitions over a fixed~$P$, in which non-participants remain stationary.

A set of machine transactions $R$ over $S$, each with participants $Q\subseteq P$, defines a multiagent transition system over $S$ and $P$ as follows:

\begin{definition}[Transactions-Based Multiagent Transition System]\label{definition:tbdts}
Given agents $P \subset \Pi$,  states $S$ with initial state $s_0\in S$, 
and a set of transactions $R$, each $t\in R$ over some $Q\subseteq P$ and  $S$, the \temph{transactions-based multiagent transition system} over $P$, $S$, and $R$ is the multiagent transition system $TS= (S^P,\{s_0\}^P,R{\uparrow}P,\emptyset)$, with $\sim$ the empty partial equivalence, so that no transition belongs to a class and no liveness obligation is imposed.
\end{definition}
In other words, one can fully specify a multiagent transition system over $S$ and $P$ simply by providing a set of transactions over $S$, each with participants $Q\subseteq P$.  

Similarly, a set $R$ of volition-guarded machine transactions with an equivalence $\sim$ on their underlying machine transactions induces a volitional multiagent transition system:

\begin{definition}[Volitional Multiagent Transition System]\label{definition:vmts}
Given agents $P\subset \Pi$, machine states $S$ with initial state $s_0$, a set $R$ of volition-guarded machine transactions such that every $(t,Q')\in R$ has the participants of $t$ contained in $P$ and distinct underlying machine transactions have disjoint $P$-closures, and an equivalence $\sim$ on the set $T_R := \{t : (t,Q')\in R\text{ for some }Q'\}$ of underlying machine transactions (a total equivalence, hence in particular a partial equivalence in the sense of Definition~\ref{definition:ts}), the \temph{volitional multiagent transition system induced by $(S,R,{\sim})$ over $P$} is the multiagent transition system $(\calA^P,c_0,T_V,{\sim_V})$ where:
\begin{enumerate}
    \item $\calA := 2^{T_R/\sim} \times S$ is the \temph{agent state space};
    \item $c_0 \in \calA^P$ is the \temph{initial agent configuration}, with $c_0{^v_p}=\emptyset$ and $c_0{^m_p}=s_0$ for every $p\in P$;
    \item $T_V$ consists of all transitions $e\to e'\in (\calA^P)^2$ of one of two forms: (\ia)~a \temph{change-volition} of some $p\in P$---$e^v_p, e'{^v_p}\subseteq (T_R/{\sim})_p$ and $e^v_p \ne e'{^v_p}$, $e^m_p = e'{^m_p}$, and $e_r = e'_r$ for every $r\in P\setminus\{p\}$; or (\ib)~a \temph{volitional machine transaction} induced by some volition-guarded machine transaction $(t,Q')\in R$ per Definition~\ref{definition:vmat}(2);
    \item $\sim_V$ is the restriction of ${\sim}{\uparrow}P$ (Definition~\ref{definition:closure}) to the volitional machine transactions, relating two of them whenever their inducing machine transactions are $\sim$-equivalent, and leaves every change-volition transition outside its domain---so change-volitions belong to no class and carry no liveness obligation.
\end{enumerate}
\end{definition}
The set $R$ of volition-guarded machine transactions that specifies a platform thus determines the volitional machine transitions of the induced VMTS, while change-volition transitions are freely available to every agent independent of $R$.

We define when a volition-guarded transaction is enabled, combining the machine precondition with the volitions of its guards.

\begin{definition}[Enabled]\label{definition:enabled}
Given a set of volition-guarded machine transactions, each $(t,Q')$ with $t = d \rightarrow d'$ a machine transaction over some  $Q' \subseteq Q\subseteq P$ and $S$, and an equivalence $\sim$ on machine transactions: the volition-guarded transaction $(t,Q')$ is \temph{enabled} in agent configuration $c$ over $P$ if $c^m_p = d_p$ for every $p \in Q$, and $[t] \in c^v_q$ for every $q \in Q'$.  An equivalence class $[t]$ is \temph{enabled} in $c$ if some volition-guarded $(t',Q')$ with $t' \in [t]$ is enabled in $c$.
\end{definition}
A volition-guarded transaction with an empty guard ($Q' = \emptyset$) requires no volitions and is enabled whenever its machine precondition is met.  For the volitional multiagent transition system above, a class of volitional machine transactions is enabled at a configuration in the sense of Definition~\ref{definition:ts} exactly when some volition-guarded transaction inducing it is enabled in the above sense; this determines which runs are live and correct per Definition~\ref{definition:ts}.  Change-volition transitions belong to no class and so impose no liveness obligation; personal choices remain free.

\subsection{Implementations}\label{subsection:implementations}

An implementation relates a specification transition system to one that implements it, via a mapping of states.  We adapt the notion of implementation among transition systems~\cite{shapiro2021multiagent,shapiro2026implementing}, with safety and liveness as defined above.

\begin{definition}[Implementation]\label{definition:implementation}
Given transition systems $TS=(S,s_0,T,{\sim})$, the \temph{specification}, and $TS'=(S',s'_0,T',{\sim'})$, an \temph{implementation of $TS$ by $TS'$} is a function $\sigma:S'\to S$ with $\sigma(s'_0)=s_0$, in which case the pair $(TS',\sigma)$ is an \temph{implementation of $TS$}.  Given a computation $r'=s'_1,s'_2,\ldots$ of $TS'$, $\sigma(r')$ is the computation $\sigma(s'_1),\sigma(s'_2),\ldots$ with consecutive repetitions removed; a transition $s'\to d'$ of $TS'$ with $\sigma(s')=\sigma(d')$ is a \temph{stutter}, contributing no step to $\sigma(r')$.
\end{definition}

\begin{definition}[Correct and Complete Implementation]\label{definition:correct-implementation}
An implementation $(TS',\sigma)$ of $TS$ is \temph{correct} if $\sigma$ maps every correct run of $TS'$ to a correct run of $TS$, and \temph{complete} if every correct run $r$ of $TS$ is $\sigma(r')$ for some correct run $r'$ of $TS'$.
\end{definition}

The original framework~\cite{shapiro2021multiagent} also accounts for faulty behaviour and its resilience, which we adapt to the present setting.

\begin{definition}[Safety Fault, Fault-Resilient Implementation]\label{definition:fault-resilient}
Let $(TS',\sigma)$ be an implementation of $TS$, with $TS'=(S',s'_0,T',{\sim'})$.  A \temph{safety fault} is a set $F\subseteq S'^2\setminus T'$ of \temph{faulty transitions}, and a computation \temph{performs} $F$ if it includes a transition in $F$.  The implementation is \temph{$F$-resilient} if $\sigma$ maps every live run $r'\subseteq T'\cup F$ of $TS'$ to a correct run of $TS$.
\end{definition}
Faulty transitions lie outside $T'$, hence in no class of $\sim'$, so they impose no liveness requirement.

The dual of a safety fault is a \temph{liveness fault}: rather than acting incorrectly, correct agents cease to act, so that obligations that ought to be met are not.  A safety fault enlarges the transitions that may occur; a liveness fault waives some of the obligations that liveness would otherwise impose.

\begin{definition}[Liveness Fault, Liveness-Fault-Resilient Implementation]\label{definition:liveness-fault}
Let $(TS',\sigma)$ be an implementation of $TS$, with $TS'=(S',s'_0,T',{\sim'})$.  A \temph{liveness fault} is a set $\Lambda\subseteq T'/{\sim'}$ of liveness classes.  A run \temph{performs} $\Lambda$ if it is not live with respect to some class in $\Lambda$.  A run is \temph{$\Lambda$-live} if no class $[t]\in T'/{\sim'}\setminus\Lambda$ is enabled in every state of some suffix with no member of $[t]$ occurring in the suffix---that is, live for $\sim'$ with the classes in $\Lambda$ removed from its domain.  The implementation is \temph{$\Lambda$-resilient} if $\sigma$ maps every $\Lambda$-live run of $TS'$ to a correct run of $TS$.
\end{definition}
With $\Lambda=\emptyset$ the $\Lambda$-live runs are exactly the live runs.  In a volitional multiagent transition system the canonical liveness fault is \temph{fail-stop}: when a set of agents $D\subseteq P$ cease to act, the waived classes are $\Lambda_D := \{[t]\in T'/{\sim'} : \text{some participant of } t \text{ is in } D\}$, every class some member of which requires a stopped agent.  This is well-defined because $\sim'$-equivalent transactions share participants (Definition~\ref{definition:equivalence}); and it respects guards, since a class guarded by either of two persons retains an obligated representative as long as one of them is live.

The instance-level notion of resilient implementation that composes along a protocol stack is developed in~\cite{shapiro2021multiagent} and can be adapted to this setting.

\section{Grassroots Protocols}\label{section:gs-protocols}

Here we define what is a protocol; 
define when a protocol is grassroots, using the notion of interleaving of correct runs;
show how to define a protocol via a set of transactions; 
prove that any transactions-based protocol is oblivious under a natural condition;
and conclude that if it is also interactive, it is grassroots.
For volitional transactions-based protocols, guarding every transaction of two or more participants guarantees this condition; and a stronger notion, volitionally grassroots, requires the first interaction between any two groups to be willed by a member of each.

\subsection{Protocols and Grassroots Protocols}

A protocol is a family of multiagent transition systems, one for each set of agents $P\subset \Pi$, which share an underlying set of local machine states $\calS$ with a designated initial state $s_0$.
A \emph{local-states function} maps every set of agents $P \subset \Pi$ to an arbitrary set of local machine states $S(P)\subset \calS$ that includes $s_0$ and satisfies $P\subset P' \subset \Pi \implies S(P) \subset S(P')$.
\begin{definition}[Protocol]\label{definition:family}
A \temph{protocol} $\calF$ over a local-states function $S$ is a family of multiagent transition systems that has exactly one transition system $\calF(P) = (C(P),c_0(P),T(P),{\sim(P)})$ for every $P \subset \Pi$, with \temph{agent states} $\calA(P)$, configurations $C(P) := \calA(P)^P$, initial configuration $c_0(P)\in C(P)$, and partial equivalence $\sim(P)$ on $T(P)$ determined by the protocol, such that $P\subseteq P' \subset \Pi$ implies $\calA(P)\subseteq \calA(P')$ and $c_0(P)_p = c_0(P')_p$ for every $p\in P$.
\end{definition}

Informally, in a grassroots protocol two disjoint groups of agents can each operate independently---their interleaved correct runs are correct runs of the combined system---yet the combined system offers genuinely new behaviours that neither group could produce on its own.  To capture this notion formally, we first define the interleaving of runs of two disjoint groups.

\begin{definition}[Interleaving]\label{definition:interleaving}
Let $P, P' \subset \Pi$ be disjoint nonempty sets of agents, $r = c_0, c_1, \ldots$ a run of $\calF(P)$, and $r' = d_0, d_1, \ldots$ a run of $\calF(P')$.  An \temph{interleaving} of $r$ and $r'$ is a sequence $e_0, e_1, \ldots$ of configurations in $C(P \cup P')$ for which there exist non-decreasing sequences of indices $(i_k)_{k \geq 0}$ and $(j_k)_{k \geq 0}$ with $i_0 = j_0 = 0$ such that for every $k \geq 0$:
\begin{enumerate}
\item $(e_k)_p = (c_{i_k})_p$ for every $p \in P$,
\item $(e_k)_q = (d_{j_k})_q$ for every $q \in P'$,
\item if $e_{k+1}$ exists, then exactly one of:
  (\ia) $i_{k+1} = i_k + 1$ and $j_{k+1} = j_k$ (a $P$-step), or
  (\ib) $i_{k+1} = i_k$ and $j_{k+1} = j_k + 1$ (a $P'$-step).
\end{enumerate}
Moreover, if $r$ is finite of length $n$ then $i_k = n$ for some $k$, and if $r$ is infinite then for every $m \ge 0$ there is a $k$ with $i_k = m$; likewise for $r'$ and $(j_k)$.
\end{definition}
Note that an interleaving is well-defined: by Definition~\ref{definition:family}, $\calA(P) \subseteq \calA(P\cup P')$ and $\calA(P') \subseteq \calA(P\cup P')$, so each $e_k$, with $p$-components in $\calA(P)$ and $q$-components in $\calA(P')$, is a valid configuration in $C(P\cup P') = \calA(P\cup P')^{P\cup P'}$.  Also, $e_0 = c_0(P\cup P')$, since $(e_0)_p = (c_0)_p = c_0(P\cup P')_p$ for $p\in P$ and $(e_0)_q = (d_0)_q = c_0(P\cup P')_q$ for $q\in P'$, by the agreement of initial configurations across $\calF(P)$, $\calF(P')$, and $\calF(P\cup P')$.

\begin{definition}[Interaction, Interactive Run, First Interaction]\label{definition:first-interaction}
Let $P, P'\subset\Pi$ be disjoint and nonempty.  For a configuration $c$ over $P\cup P'$ and $Q\subseteq P\cup P'$, write $c|_Q$ for the restriction of $c$ to $Q$.  A transition $e\to e'$ of $\calF(P\cup P')$ is an \temph{interaction between $P$ and $P'$} if $e|_P \ne e'|_P$ and $e|_{P'} \ne e'|_{P'}$, and $e|_P \to e'|_P$ is not a transition of $\calF(P)$ or $e|_{P'} \to e'|_{P'}$ is not a transition of $\calF(P')$.  A run $\hat r$ of $\calF(P\cup P')$ is \temph{interactive between $P$ and $P'$} if some transition of $\hat r$ is an interaction between $P$ and $P'$; the \temph{first interaction} of $P$ and $P'$ in $\hat r$ is the first such transition.
\end{definition}

We can now define the key notion of this paper, a grassroots protocol.  The following definition improves upon the original definition~\cite{shapiro2023grassrootsBA}, which was formulated in terms of a subset relation ($P\subset P'$) and conditions on the availability of transitions, and upon the definition of~\cite{shapiro2025atomic}, which did not incorporate liveness.  The new definition captures the informal notion of grassroots directly, using disjoint groups and the interleaving of their correct runs; the differences are discussed below.

\begin{definition}[Oblivious, Interactive, Grassroots]\label{definition:grassroots}
A  protocol $\calF$ is:
\begin{enumerate}
    \item \temph{oblivious} if for every disjoint nonempty $P, P' \subset \Pi$,
    every interleaving of a correct run of $\calF(P)$ and a correct run of $\calF(P')$ is a correct run of $\calF(P\cup P')$.
    \item  \temph{interactive} if for every disjoint nonempty $P, P' \subset \Pi$, some correct run of $\calF(P\cup P')$ is interactive between $P$ and $P'$.
    \item \temph{grassroots} if it is oblivious and interactive.
\end{enumerate}
\end{definition}

\mypara{Oblivious} Being oblivious means two disjoint groups coexist without interference: any interleaving of a correct run of one with a correct run of the other is a correct run of the combined system, so each group's correct operation is undisturbed by the other's presence.

\mypara{Interactive} Being interactive means the combined system has a correct run in which a single transaction engages agents of both groups---a coupling that no interleaving of the groups' separate runs can produce.  The first such transaction in a run is its first interaction (Definition~\ref{definition:first-interaction}).  The substance of interactivity is the content of these transactions, which the platforms in Section~\ref{section:platforms} illustrate.

Federated systems such as Mastodon~\cite{raman2019challenges} are oblivious, as servers in one group can ignore servers in the other, but are not interactive:  a group of clients $P$ without a server cannot do more when joined by another group of clients $P'$, also without a server.
Any protocol that employs a shared global data structure---whether replicated (Blockchain~\cite{bitcoin}) or distributed (DHT~\cite{maymounkov2002kademlia}, IPFS~\cite{benet2014ipfs})---is not oblivious, and hence not grassroots; the argument covers trackerless BitTorrent~\cite{torrentfreak2021bittorrent}, which relies on a DHT.  In Appendix~\ref{appendix:not-grassroots} we prove that Bitcoin is not grassroots and show that the same argument applies to distributed hash tables and IPFS.

\subsection{Transactions-Based Grassroots Protocols}

A protocol can be specified by a set of machine transactions over a local-states function, lifting the closure construction of Section~\ref{section:dts} from a fixed set of agents to the whole family.  We show that such a protocol is oblivious under a natural condition, and hence grassroots when it is also interactive.

\begin{definition}[Transactions Over a Local-State Function]\label{definition:tblsf}
Let $S$ be a local-states function.
A set of transactions $R$ is \temph{over $S$} if every transaction $t\in R$
is a multiagent transition over $Q$ and $S(P')$ for some $Q \subseteq P'\subset \Pi$.  Given such a set $R$ and $P\subset \Pi$,
$
R(P) := \{ t\in R : t \text{ is over } Q \text{ and } S(P'), Q \subseteq P'\subseteq P\}
$.
\end{definition}

\begin{definition}[Transactions-Based Protocol]\label{definition:mats-protocol}
Let $S$ be a local-states function and $R$ a set of machine transactions over $S$, in which distinct transactions have disjoint $P$-closures for every $P\subset\Pi$, with equivalence $\sim$.  The \temph{transactions-based protocol over $R$, $S$, and $\sim$} assigns to each $P\subset \Pi$ the multiagent transition system $\calF(P) := (S(P)^P,\{s_0\}^P,R(P){\uparrow}P,{\sim}{\uparrow}P)$, where $R(P){\uparrow}P$ and ${\sim}{\uparrow}P$ are the $P$-closures (Definition~\ref{definition:closure}) of $R(P)$ and of $\sim$ restricted to $R(P)$.  In particular $\calA(P) = S(P)$, monotone as Definition~\ref{definition:family} requires, $C(P) = S(P)^P$, and $c_0(P) = \{s_0\}^P$.
\end{definition}

By Definition~\ref{definition:mats-protocol}, the transitions of $\calF(P\cup P')$ are induced by the machine transactions in $R(P\cup P')$.  Since liveness ranges over the equivalence classes of $\calF(P\cup P')$ (Definition~\ref{definition:liveness}), any class $[t]$ in $R(P\cup P')/\!\sim$ whose transactions have participants spanning both $P$ and $P'$ could obstruct obliviousness if enabled in an interleaving.  The following proposition identifies the condition under which this does not occur.

\begin{proposition}\label{proposition:oblivious}
A transactions-based protocol is oblivious provided that for every disjoint nonempty $P, P' \subset \Pi$,  no equivalence class whose transactions have participants spanning both $P$ and $P'$ is ever enabled in any interleaving of correct runs of $\calF(P)$ and $\calF(P')$.
\end{proposition}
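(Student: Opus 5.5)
Fix disjoint nonempty $P,P'$, correct runs $r=c_0,c_1,\ldots$ of $\calF(P)$ and $r'=d_0,d_1,\ldots$ of $\calF(P')$, and an interleaving $\hat r=e_0,e_1,\ldots$ with index sequences $(i_k),(j_k)$. We must show $\hat r$ is a run of $\calF(P\cup P')$, that it is safe, and that it is live. That it is a run is immediate: the remark after Definition~\ref{definition:interleaving} gives $e_0=c_0(P\cup P')$.

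\textbf{Safety.} Consider a $P$-step $e_k\to e_{k+1}$. The agents of $P'$ are stationary in it, and on $P$ it coincides with $c_{i_k}\to c_{i_k+1}$. By safety of $r$, this transition lies in $R(P){\uparrow}P$, hence lies in $t{\uparrow}P$ for some $t\in R(P)$ with participants $Q\subseteq P$. By Definition~\ref{definition:tblsf}, $R(P)\subseteq R(P\cup P')$. Unfolding Definition~\ref{definition:closure}, the members of $Q$ do exactly as in $t$ and all other agents of $P\cup P'$ are stationary. Hence $e_k\to e_{k+1}\in t{\uparrow}(P\cup P')\subseteq R(P\cup P'){\uparrow}(P\cup P')$, using Lemma~\ref{lemma:closure-compositional} for the inclusion. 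For this we must check that the states of $P$ lie in $S(P\cup P')$; this follows from monotonicity of the local-states function. $P'$-steps are symmetric.

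\textbf{Liveness.} Suppose, for contradiction, that some class $[t]$ of ${\sim}{\uparrow}(P\cup P')$ is enabled in every state of a suffix of $\hat r$, and no member of $[t]$ occurs in that suffix. Since $\sim$-equivalent transactions share participants (Definition~\ref{definition:equivalence}), all transactions in $[t]$ have a common participant set $Q$. There are three cases.

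First, $Q$ meets both $P$ and $P'$. Then the hypothesis of the proposition forbids $[t]$ from ever being enabled in $\hat r$, a contradiction.

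Second, $Q\subseteq P$ (the case $Q\subseteq P'$ is symmetric). Then each $t'\in[t]$ is over $Q$ and some $S(P'')$. I expect the main obstacle here, and I would try to argue $t'\in R(P)$, so that $[t]$ restricted to $R(P)$ is a class of $\calF(P)$. A transaction over $Q\subseteq P$ that is enabled from states reachable in $\calF(P)$ should come from $R(P)$. If that cannot be established directly from Definition~\ref{definition:tblsf}, I would restrict attention to the members of the class that are actually enabled along the suffix. Their preconditions involve only states of agents in $P$, which by the interleaving are states $c_{i}$ from $r$; so these members are transitions of $\calF(P)$ from $c_i$.

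Within the second case, two subcases remain. If $r$ is infinite, the suffix of $\hat r$ projects onto a suffix of $r$, since every index $m$ is reached. Enabledness in $e_k$ depends only on $e_k|_Q=(c_{i_k})|_Q$, so the corresponding class is enabled throughout a suffix of $r$. Liveness of $r$ then gives a member occurring at some step $c_i\to c_{i+1}$. That step appears as a $P$-step of $\hat r$ within the suffix, and it lies in the closure of a $\sim$-equivalent transaction, hence in $[t]$. This is a contradiction. If $r$ is finite, $r$ eventually rests at its last state $c_n$. The class is then enabled at $c_n$, yet never taken, so the suffix $c_n$ of $r$ contradicts liveness of $r$.

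The care needed throughout is to keep the closure classes consistent: ${\sim}{\uparrow}P$-classes are the traces on $P$ of ${\sim}{\uparrow}(P\cup P')$-classes. The disjoint-closure assumption of Definition~\ref{definition:mats-protocol} makes the inducing transaction unique, and this gives that consistency.
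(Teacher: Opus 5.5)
Your safety argument and your liveness strategy match the paper's: split on the common participant set $Q$, dispose of the spanning case by the hypothesis, and project the stuck class onto the correct run of the group containing $Q$. The gap is the step you flagged yourself, and your fallback does not close it.

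Enabledness at $e_k$ constrains only the pre-state of a representative $t'$. Membership in $R(P)$ (Definition~\ref{definition:tblsf}) requires both states of $t'$ to lie in $S(P)^Q$. So the inference ``their preconditions involve only states of agents in $P$ \ldots\ so these members are transitions of $\calF(P)$'' does not follow. If the representatives enabled along the suffix lie outside $R(P)$, the class of $\calF(P)$ you need is not enabled at the states of $r$; indeed $[t]$ may have no representative in $R(P)$ at all. Liveness of $r$ then gives you nothing.

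Disjoint closures yield only the converse of what you need. Every class of $\calF(P)$ is the trace of a class of $\calF(P\cup P')$, but an enabled class of $\calF(P\cup P')$ with participants in $P$ need not have an enabled trace. The paper's own proof makes the same unjustified move when it concludes that $[t]$ is enabled at every configuration of a tail of $r$.

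No argument can close this step, because the statement fails without an extra hypothesis. Let $S(P)=2^P$, and let $R$ consist of the unary transactions $\mathrm{follow}(p,q)$: $c'_p:=c_p\cup\{q\}$ provided $q\notin c_p$, with participants $\{p\}$ only, all $\mathrm{follow}(p,q)$ transactions forming one class. Then $\mathrm{follow}(p,q)\in R(P)$ iff $p,q\in P$ and $c_p\subseteq P$. Take disjoint $P\ni p$ and $P'\ni q$. Every interleaving of correct runs keeps $c_p\subseteq P$, so $[\mathrm{follow}(p,q)]$ is enabled in $\calF(P\cup P')$ at every configuration and is never taken. Yet no class spans both groups, so the hypothesis holds vacuously.

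What is missing is a locality condition. For instance: for every $P$, every $t\in R$ with participants $Q\subseteq P$ and pre-state in $S(P)^Q$ also has its post-state in $S(P)^Q$, and hence lies in $R(P)$. Under this condition, any representative enabled at $e_k$ with $Q\subseteq P$ lies in $R(P)$. The restriction to $P$ of the enabled transition is then a transition of $\calF(P)$ from $c_{i_k}$ in the trace of $[t]$, and both of your subcases go through as written. The machine transactions of both platforms in Section~\ref{section:platforms} satisfy this condition.
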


\begin{proof}
Let $\calF$ be a transactions-based protocol over machine transactions $R$, local-states function $S$, and equivalence $\sim$.  Let $P, P' \subset \Pi$ be disjoint and nonempty, $r = c_0, c_1, \ldots$ a correct run of $\calF(P)$, $r' = d_0, d_1, \ldots$ a correct run of $\calF(P')$, and $e = e_0, e_1, \ldots$ an interleaving of $r$ and $r'$.

\mypara{Safety}  $e_0 = c_0(P\cup P')$ as noted above.  Consider a $P$-step $e_k \rightarrow e_{k+1}$; the transition $c_{i_k} \rightarrow c_{i_{k+1}}$ of $\calF(P)$ it lifts lies in $R(P){\uparrow}P$, hence in $t{\uparrow}P$ for some $t\in R(P)$ over participants $Q\subseteq P$.  In $e_k\to e_{k+1}$ the agents of $Q$ do what they do in $t$ (matching $c_{i_k}\to c_{i_{k+1}}$ on $Q$), the agents of $P\setminus Q$ are stationary, and the agents of $P'$ are stationary because it is a $P$-step.  Every agent outside $Q$ is thus stationary and $Q$ matches $t$, so $e_k\to e_{k+1}\in t{\uparrow}(P\cup P')$.  Since $R(P)\subseteq R(P\cup P')$, $e_k\to e_{k+1}\in R(P\cup P'){\uparrow}(P\cup P')$, a transition of $\calF(P\cup P')$.  The case of a $P'$-step is symmetric.

\mypara{Liveness}  Suppose for contradiction that some class $[t]\in R(P\cup P')/\!\sim$ is enabled in some suffix of $e$ with no member of $[t]$ taken in the suffix.  By the hypothesis of the Proposition, every representative of $[t]$ enabled at any $e_k$ has participants contained in $P$ or contained in $P'$.  Consider a representative $t'$ enabled at some $e_k$ in the suffix; without loss of generality its participants $Q\subseteq P$.  Enablement of $[t]$ at $e_k$ depends only on the states of agents in $P$, which match those of $r$ at index $i_k$.  Since the interleaving exhausts $r$, the indices $i_k$ over the suffix cover a tail of $r$, so $[t]$ is enabled at every configuration of that tail; as no member of $[t]$ is taken in the suffix, none is taken along that tail of $r$.  This contradicts correctness of $r$.  The $P'$ case is symmetric.
\end{proof}

\mypara{Relation to the original grassroots definition}
The original definition of grassroots protocols~\cite{shapiro2023grassrootsBA,shapiro2025atomic} used a subset relation ($P\subset P'$) and conditions on transition availability, without incorporating liveness.  The new definition uses disjoint groups and interleavings of correct runs.  The two definitions are not comparable in general---neither implies the other---but we informally claim that, for the platforms considered here, both definitions agree.

\subsection{Volitional Transactions-Based Grassroots Protocols}\label{subsection:volitional-protocols}

A volitional transactions-based protocol assigns to each set of agents the volitional multiagent transition system induced by its volition-guarded transactions.  Obliviousness follows from a syntactic guard condition, in three steps.  First, the Volitional Containment Lemma (Lemma~\ref{lemma:volitional-containment}) establishes that an agent's volitional state, in any configuration of $\calF(P)$, never contains classes of transactions outside $R(P)$ (Definition~\ref{definition:tblsf}).  Second, Proposition~\ref{proposition:volitional-oblivious} uses this invariant to reduce obliviousness to a condition on equivalence classes in the interleaving.  Third, Corollary~\ref{corollary:guarded-oblivious} establishes that condition whenever every volition-guarded transaction of two or more participants has a nonempty guard.  A protocol so oblivious is grassroots once it is interactive (Definition~\ref{definition:grassroots}).  We then strengthen grassroots to a volitional notion: volitionally grassroots additionally requires the first interaction between any two groups to be willed by a member of each.

Throughout this subsection, $\sim$ is an equivalence on $T_R$, the machine transactions underlying the whole set $R$; for $T'\subseteq T_R$ we write $T'/{\sim}$ for the set of $\sim$-classes with a representative in $T'$, and $[t]$ for the class of $t$ in $T_R/{\sim}$.  Hence $P\subseteq P'$ implies $T_{R(P)}/{\sim}\subseteq T_{R(P')}/{\sim}$.

\begin{definition}[Volitional Transactions-Based Protocol]\label{definition:protocol-transactions}
Let $S$ be a local-states function and $R$ a set of volition-guarded transactions over $S$ with equivalence $\sim$.
The \temph{protocol $\calF$ over $R$, $S$, and $\sim$} assigns to each set of agents $P\subset \Pi$ the volitional multiagent transition system $\calF(P)$ induced by $(S(P),R(P),{\sim})$ over $P$ (Definition~\ref{definition:vmts}).  In particular, $C(P) = \calA(P)^P$ with agent state space $\calA(P) := 2^{T_{R(P)}/\sim} \times S(P)$---monotone in $P$, as Definition~\ref{definition:family} requires---and $c_0(P)$ has $p$-component $(\emptyset,s_0)$ for every $p\in P$.
\end{definition}

The obliviousness results below rest on the following invariant: in a group's own system an agent wills only transactions internal to the group, so the guard of a transaction that reaches another group cannot will it while the group runs alone, and the transaction is never enabled in an interleaving.

\begin{lemma}[Volitional Containment]\label{lemma:volitional-containment}
Let $\calF$ be a volitional transactions-based protocol over a set of volition-guarded transactions $R$ with equivalence $\sim$.  For every $P\subset \Pi$, every configuration $c$ of $\calF(P)$, and every $p\in P$: $c^v_p \subseteq T_{R(P)}/\!\sim$, where $T_{R(P)} := \{t : (t,Q')\in R(P)\text{ for some }Q'\}$.
\end{lemma}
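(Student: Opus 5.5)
The lemma follows directly from how the state space of $\calF(P)$ is defined, so the plan is to unfold the definitions rather than argue about runs. By Definition~\ref{definition:protocol-transactions}, $\calF(P)$ is the volitional multiagent transition system induced by $(S(P),R(P),{\sim})$ over $P$. Its configurations are $C(P)=\calA(P)^P$ with $\calA(P) = 2^{T_{R(P)}/\sim}\times S(P)$. So for any configuration $c\in C(P)$ and any $p\in P$, the component $c_p$ is a pair $(V,m)$ with $V\in 2^{T_{R(P)}/\sim}$, that is, $V\subseteq T_{R(P)}/{\sim}$. Since $c^v_p$ is by definition this first component $V$, we get $c^v_p \subseteq T_{R(P)}/{\sim}$. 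One point to check is that $T_{R(P)}/{\sim}$ means the set of classes (taken in $T_R/{\sim}$) that have a representative in $T_{R(P)}$, as fixed at the start of the subsection. This matches the reading of $T_R/{\sim}$ in Definition~\ref{definition:vmts} once that definition is instantiated with $R(P)$ and $\sim$ restricted accordingly.

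As a sanity check, and in case ``configuration'' is read as ``reachable configuration'', I would also give an inductive argument along any computation from $c_0(P)$. This second argument confirms that the transitions of Definition~\ref{definition:vmts} preserve the typing. The base case holds because $c_0(P)^v_p=\emptyset$. For the step, consider the two transition forms. A change-volition of $p$ requires $e'{^v_p}\subseteq (T_{R(P)}/{\sim})_p\subseteq T_{R(P)}/{\sim}$ and leaves all other agents unchanged. A volitional machine transaction induced by $(t,Q')\in R(P)$ sets $e'{^v_q}=e^v_q\setminus\{[t]\}$, which only shrinks each volitional state. In both cases the invariant is preserved.

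I do not expect a real obstacle. The only delicate point is notational: making sure the class notation $[t]$ for classes of $T_R/{\sim}$ agrees with the classes over $T_{R(P)}$ used inside $\calF(P)$. I would handle this by stating explicitly that $\sim$ on $T_{R(P)}$ is the restriction of $\sim$ on $T_R$, and identifying each restricted class with the global class that contains it. This identification is injective because the restricted classes are exactly the global classes intersected with $T_{R(P)}$.
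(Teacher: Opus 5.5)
Your proposal is correct and is essentially the paper's proof, which observes only that every configuration of $\calF(P)$ lies in $\calA(P)^P$ with $\calA(P) = 2^{T_{R(P)}/\sim}\times S(P)$, so the containment holds by the typing of the state space. Your inductive check over reachable configurations and your remark identifying restricted classes with global $\sim$-classes are both correct, but the argument does not need them.
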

\begin{proof}
By Definition~\ref{definition:vmts}, every configuration of $\calF(P)$ is in $\calA(P)^P$ with $\calA(P) = 2^{T_{R(P)}/\sim}\times S(P)$, so $c^v_p \subseteq T_{R(P)}/\!\sim$ for every $p\in P$.
\end{proof}

\begin{proposition}\label{proposition:volitional-oblivious}
A volitional transactions-based protocol is oblivious provided that for every disjoint nonempty $P, P' \subset \Pi$, no equivalence class whose transactions have participants spanning both $P$ and $P'$ is ever enabled in any interleaving of correct runs of $\calF(P)$ and $\calF(P')$.
\end{proposition}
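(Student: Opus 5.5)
The plan mirrors the proof of Proposition~\ref{proposition:oblivious}, adapted to the volitional setting. Fix disjoint nonempty $P,P'\subset\Pi$, correct runs $r=c_0,c_1,\ldots$ of $\calF(P)$ and $r'=d_0,d_1,\ldots$ of $\calF(P')$, and an interleaving $e=e_0,e_1,\ldots$ with index sequences $(i_k)$, $(j_k)$. As noted after Definition~\ref{definition:interleaving}, $e_0=c_0(P\cup P')$. It remains to prove safety and liveness of $e$ in $\calF(P\cup P')$.

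\textbf{Safety.} Consider a $P$-step $e_k\to e_{k+1}$, which lifts the transition $c_{i_k}\to c_{i_k+1}$ of $\calF(P)$. I split on the two forms in Definition~\ref{definition:vmts}(3).

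If it is a change-volition of some $p\in P$, then $e_k\to e_{k+1}$ changes only $p$'s volitional state, and all agents of $P'$ are stationary. By Lemma~\ref{lemma:volitional-containment}, the new volitional state lies in $(T_{R(P)}/{\sim})_p\subseteq (T_{R(P\cup P')}/{\sim})_p$. Hence the step is a change-volition of $\calF(P\cup P')$.

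If it is a volitional machine transaction induced by $(t,Q')\in R(P)$, then $(t,Q')\in R(P\cup P')$ as well, and the machine-state and guard conditions on $Q$ carry over verbatim. The delicate clause is that $[t]$ must be removed from \emph{every} agent's volitional state, including those of $P'$. Here Lemma~\ref{lemma:volitional-containment} applied to $r'$ gives $(e_k)^v_q\subseteq T_{R(P')}/{\sim}$ for $q\in P'$. Moreover, such a volition consists of classes in which $q$ participates, whereas every member of $[t]$ has the same participants as $t$, namely $Q\subseteq P$. So $[t]\notin (e_k)^v_q$, and removing it changes nothing, consistent with $q$ being stationary.

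The remaining issue is that the step might be a stutter, i.e.\ $e_k=e_{k+1}$; this cannot happen, since $c_{i_k}\ne c_{i_k+1}$. The $P'$-step case is symmetric. I expect this step, and in particular the bookkeeping that volitions of the other group are unaffected, to be the main (though modest) obstacle. It relies on participant-preservation of $\sim$ (Definition~\ref{definition:equivalence}) together with the containment lemma.

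\textbf{Liveness.} Suppose some class $[t]$ of $\sim_V$ in $\calF(P\cup P')$ is enabled throughout a suffix of $e$ and never taken there. By hypothesis, any enabled representative has participants within $P$ or within $P'$; say an enabled volition-guarded $(t',Q')$ with $t'\in[t]$ has participants $Q\subseteq P$. Enablement (Definition~\ref{definition:enabled}) depends only on machine states of $Q$ and volitions of $Q'\subseteq P$. These agree with $c_{i_k}$, so the same transaction is enabled in $\calF(P)$ at $c_{i_k}$, with $(t',Q')\in R(P)$ since its participants lie in $P$.

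Because the interleaving exhausts $r$ (the final clause of Definition~\ref{definition:interleaving}), the indices $i_k$ over the suffix cover a tail of $r$, and $[t]$ is enabled throughout that tail. Any occurrence of a member of $[t]$ in that tail of $r$ would appear as a $P$-step of $e$ in the suffix, so none occurs. This contradicts liveness of $r$. If the representative instead has participants in $P'$, the same argument applied to $r'$ gives the contradiction. Finally, change-volitions lie outside the domain of $\sim_V$ and impose no obligation, so no other class needs to be considered.
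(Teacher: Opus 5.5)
Your proposal is correct and follows the paper's proof essentially step for step. It uses the same safety case split, with Lemma~\ref{lemma:volitional-containment} and participant-preservation of $\sim$ showing that removing $[t]$ leaves every $P'$-volition unchanged, and the same liveness argument as Proposition~\ref{proposition:oblivious}. One step is worth flagging, though it is tacit in the paper's argument too: the claim that $(t',Q')\in R(P)$ merely because its participants lie in $P$ does not follow from Definition~\ref{definition:tblsf}. That definition also requires the post-state of $t'$ to lie in $S(P)^Q$. This holds for both platforms of Section~\ref{section:platforms}, but not for an arbitrary $R$. For example, an unguarded unary transaction of $p\in P$ that writes some $r\in P'$ into $p$'s state would be enabled forever in an interleaving while being absent from $\calF(P)$. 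So both your proof and the paper's silently assume this closure property.
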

\begin{proof}
Let $\calF$ be a volitional transactions-based protocol over volition-guarded transactions $R$ with equivalence $\sim$.  Let $P, P' \subset \Pi$ be disjoint and nonempty, $r = c_0, c_1, \ldots$ a correct run of $\calF(P)$, $r' = d_0, d_1, \ldots$ a correct run of $\calF(P')$, and $e = e_0, e_1, \ldots$ an interleaving of $r$ and $r'$.

\mypara{Safety}  $e_0 = c_0(P\cup P')$ as noted above.  A $P$-step $e_k \rightarrow e_{k+1}$ lifts a transition $c_{i_k} \rightarrow c_{i_{k+1}}$ of $\calF(P)$ that, by Definition~\ref{definition:vmts}, is a change-volition of some $p\in P$ or a volitional machine transaction induced by some $(t,Q')\in R(P)$.  A change-volition of $p\in P$ is a change-volition of $p\in P\cup P'$ in $\calF(P\cup P')$, lifted with $P'$-agents unchanged because it is a $P$-step.  A volitional machine transaction induced by $(t,Q')\in R(P)$, with $t=d\to d'$ over $Q\subseteq P$, is induced by $(t,Q')\in R(P\cup P')$: its machine and volitional preconditions (Definition~\ref{definition:vmat}(2)) hold at $e_k$ because they hold at $c_{i_k}$, and $P'$-agents are machine-stationary because it is a $P$-step.  The postcondition $e_{k+1}{^v_p} = e_k{^v_p}\setminus\{[t]\}$ holds on $P$-agents by the $\calF(P)$-step; on each $p'\in P'$, by Lemma~\ref{lemma:volitional-containment} applied to $r'$, $e_k{^v_{p'}} = (d_{j_k}){^v_{p'}}\subseteq T_{R(P')}/\!\sim$, and $t\notin R(P')$ since $Q\subseteq P$ and $Q\cap P'=\emptyset$, so $[t]\notin e_k{^v_{p'}}$ and the postcondition holds vacuously.  Hence $e_k \rightarrow e_{k+1}$ is a transition of $\calF(P\cup P')$.  The case of a $P'$-step is symmetric.

\mypara{Liveness}  By the Liveness argument of Proposition~\ref{proposition:oblivious}, $e$ is live.
\end{proof}

When every volition-guarded transaction of two or more participants has a nonempty guard, the hypothesis of Proposition~\ref{proposition:volitional-oblivious} is established by a uniform argument about volitions, without invoking platform-specific machine preconditions.

\begin{corollary}[Volition-Guarded Obliviousness]\label{corollary:guarded-oblivious}
A volitional transactions-based protocol over a set of volition-guarded transactions $R$ is oblivious if every volition-guarded transaction $(t,Q')\in R$ whose machine transaction $t$ has two or more participants has a nonempty guard, $Q'\ne\emptyset$.
\end{corollary}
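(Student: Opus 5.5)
The plan is to verify the hypothesis of Proposition~\ref{proposition:volitional-oblivious} and then invoke it. Fix disjoint nonempty $P, P'\subset\Pi$, correct runs $r=c_0,c_1,\ldots$ of $\calF(P)$ and $r'=d_0,d_1,\ldots$ of $\calF(P')$, and an interleaving $e=e_0,e_1,\ldots$ of them. Let $[t]$ be a class whose transactions have participants $Q$ with $Q\cap P\ne\emptyset$ and $Q\cap P'\ne\emptyset$. It suffices to show that $[t]$ is enabled at no $e_k$. By Definition~\ref{definition:equivalence}, every member of $[t]$ has the same participants $Q$. So any volition-guarded $(t',Q')\in R$ with $t'\in[t]$ has at least two participants. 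By the assumption of the corollary, its guard $Q'$ is nonempty.

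Suppose $(t',Q')$ were enabled at $e_k$ in the sense of Definition~\ref{definition:enabled}. Pick $q\in Q'\subseteq Q\subseteq P\cup P'$; without loss of generality $q\in P$, the case $q\in P'$ being symmetric. Enablement requires $[t]\in (e_k)^v_q$. By Definition~\ref{definition:interleaving}, $(e_k)_q=(c_{i_k})_q$, a component of a configuration of $\calF(P)$. Lemma~\ref{lemma:volitional-containment} then gives $(e_k)^v_q\subseteq T_{R(P)}/{\sim}$. Hence some $t''\in T_{R(P)}$ satisfies $t''\sim t$.

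Now $t''\in T_{R(P)}$ means $t''$ is over participants contained in some $P''\subseteq P$ (Definition~\ref{definition:tblsf}). Since $t''\sim t$, the participants of $t''$ are exactly $Q$, so $Q\subseteq P$. This contradicts $Q\cap P'\ne\emptyset$ together with $P\cap P'=\emptyset$. Therefore no spanning class is ever enabled in $e$, the hypothesis of Proposition~\ref{proposition:volitional-oblivious} holds, and the protocol is oblivious.

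The main point needing care is the identification of classes across systems. The equivalence $\sim$ lives on the global $T_R$, and the volitional state of $c_{i_k}$ consists of classes of $T_R/{\sim}$ having a representative in $T_{R(P)}$, per the convention stated at the start of the subsection. The argument therefore rests on the participant-preservation clause of Definition~\ref{definition:equivalence} to transfer "participants in $P$" from the representative $t''$ to the whole class. The step I would check first is that $R(P)$ membership really forces participants inside $P$. This holds because transactions over $S(P'')$ with $Q\subseteq P''\subseteq P$ have $Q\subseteq P$.
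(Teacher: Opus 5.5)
Your proposal is correct and is essentially the paper's proof: you verify the hypothesis of Proposition~\ref{proposition:volitional-oblivious} by noting that a spanning class has $|Q|\ge 2$. That gives a nonempty guard containing some $q$ in, say, $P$, and you then use Lemma~\ref{lemma:volitional-containment} with participant preservation under $\sim$ to conclude $[t]\notin (e_k)^v_q$. The only difference is cosmetic: you argue by contradiction through a representative $t''\in T_{R(P)}$, whereas the paper first shows $[t]\notin T_{R(P)}/{\sim}$ directly.
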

\begin{proof}
We verify the hypothesis of Proposition~\ref{proposition:volitional-oblivious}.  Let $P, P'\subset\Pi$ be disjoint and nonempty, $e$ an interleaving of a correct run of $\calF(P)$ and a correct run of $\calF(P')$, $[t]$ an equivalence class of $R(P\cup P')/{\sim}$ whose participants $Q$ span both $P$ and $P'$, and $(t,Q')$ a representative.  Since $Q$ has participants in both groups, $|Q|\ge 2$, so by hypothesis $Q'\ne\emptyset$; pick $q\in Q'$ and, without loss of generality, $q\in P$.  Since $Q$ has a participant in $P'$, $t\notin R(P)$ by Definition~\ref{definition:tblsf}; and by well-formedness of $\sim$ (Definition~\ref{definition:equivalence}), every $t'\sim t$ has the same participants as $t$, so $[t]\notin T_{R(P)}/\!\sim$.  By Lemma~\ref{lemma:volitional-containment} applied to the $P$-run, $e_k{^v_q} \subseteq T_{R(P)}/\!\sim$ at every $e_k$, so $[t]\notin e_k{^v_q}$ and the guard on $q$ fails.  Hence $(t,Q')$ is not enabled at any $e_k$.
\end{proof}

Corollary~\ref{corollary:guarded-oblivious} reduces obliviousness---for volitional transactions-based protocols---to a syntactic check on guards.  The two platforms of Section~\ref{section:platforms} apply it uniformly.

In a run of $\calF(P\cup P')$ a member may will a transaction reaching the other group, which a group's own run forbids; the volitional notion constrains the resulting coupling:

\begin{definition}[Volitionally Grassroots]\label{definition:volitionally-grassroots}
A volitional transactions-based protocol $\calF$ is \temph{volitionally grassroots} if it is grassroots and, for every disjoint nonempty $P, P'\subset\Pi$ and every safe run of $\calF(P\cup P')$ interactive between $P$ and $P'$, the first interaction of $P$ and $P'$ is induced by a volition-guarded transaction $(t,Q')$ with $Q'\cap P\ne\emptyset$ and $Q'\cap P'\ne\emptyset$.
\end{definition}

\section{Grassroots Platforms via Volition-Guarded Multiagent Atomic Transactions}\label{section:platforms}

We now present two grassroots platforms: grassroots social networks  and grassroots coins and bonds.  For each platform, we specify its volition-guarded transactions, define the induced volitional multiagent transition system, prove platform-specific invariants, and prove the platform is grassroots via Corollary~\ref{corollary:guarded-oblivious} and Definition~\ref{definition:grassroots}.  By \emph{invariants} we mean properties preserved across all runs of the induced multiagent transition system; these are the platform-specific analogue of safety in the classical sense.  In both platforms every participant of every transaction changes state, so distinct transactions have disjoint closures.

\subsection{Grassroots Social Networks via Befriending and Unfriending}\label{subsection:social-networks}

In a grassroots social network~\cite{shapiro2023gsn}, the social graph is stored distributively under the control of the people themselves, with each person storing the local neighbourhood pertaining to them, and no third-party having access unless explicitly granted.  The original definition~\cite{shapiro2023gsn} was via a unary multiagent transition system; here both actions are specified as binary transactions.

Each agent $p$ maintains, as its local state, a finite set $c_p\subseteq P$ recording the friends of $p$; initially $c_p = \emptyset$.  Befriending adds $q$ to $c_p$ and $p$ to $c_q$; unfriending removes them.  Communication functions of a social network can be added, under the restriction that communication occurs only among friends~\cite{shapiro2023gsn}. A liveness theorem can be proven for this design~\cite{shapiro2023gsn}, stating that if a person $p$ that follows a person $q$ is connected to $q$ via a chain of mutual friends, each of them correct and follows $q$, then $p$ will eventually receive every item on $q$'s feed.
The specification of the grassroots social graph is the foundation for grassroots social networks with feeds, groups, messaging, explored elsewhere~\cite{shapiro2023gsn,shapiro2025glp,shapiro2026volitional}.

\begin{definition}[Grassroots Social Graph]\label{definition:SocialGraph}
The \temph{grassroots social graph} $SG$ is the protocol over the befriend and unfriend volition-guarded transactions of the Introduction, with local-states function $S(P) := 2^P$ and equivalence $\sim$ that identifies all befriend$(p,q)$ transactions with each other and all unfriend$(p,q)$ transactions with each other, per Definition~\ref{definition:protocol-transactions}.
\end{definition}

\mypara{Invariants}
\begin{lemma}[Mutuality]\label{lemma:friendship-safety}
Given a safe run $r$ of $SG$, a configuration $c\in r$, and agents $p, q\in P$,
$q \in c_p \iff p \in c_q$. 
\end{lemma}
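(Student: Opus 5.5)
We prove the statement by induction on the position of $c$ in the safe run $r = c_0, c_1, \ldots$, with $c_p$ read as the machine state $c^m_p$. The invariant is that for all $p,q\in P$, $q\in c^m_p \iff p\in c^m_q$.

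\textbf{Base case.} In the initial configuration $c_0$ every machine state is $s_0=\emptyset$ (Definition~\ref{definition:vmts}). Both sides of the biconditional are therefore false for all $p,q$, and the invariant holds.

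\textbf{Inductive step.} Assume the invariant holds at $c_k$ and that $c_k\to c_{k+1}$ is a transition of $SG(P)$. By Definition~\ref{definition:vmts} there are two cases.

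\emph{Change-volition transitions.} These leave every machine state unchanged, so the invariant carries over trivially.

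\emph{Volitional machine transactions.} Such a step is induced by a befriend or unfriend transaction over some pair $\{a,b\}$. Agents outside $\{a,b\}$ keep their machine states, so for pairs $\{p,q\}$ disjoint from $\{a,b\}$ both sides of the biconditional are unchanged.

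For befriend$(a,b)$, the sets $c^m_a$ and $c^m_b$ gain exactly $b$ and $a$ respectively. So the only membership facts that change are $b\in c^m_a$ and $a\in c^m_b$, and both become true together.

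For unfriend$(a,b)$, symmetrically, both facts become false together. The precondition $b\in c^m_a$ is not needed for this.

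It remains to check mixed pairs, where one of $p,q$ is in $\{a,b\}$ and the other is a third agent $x$. Say $p=a$ and $q=x\notin\{a,b\}$. Then the fact $x\in c^m_a$ is unaffected, since $c^m_a$ changes only in the element $b$. The fact $a\in c^m_x$ is also unaffected, since $x$ is stationary. By the inductive hypothesis the equivalence persists. The case $p=x$, $q=a$ is symmetric.

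\textbf{Main obstacle.} The only delicate point is bookkeeping: we must confirm that the induced transition acts on machine states exactly as the informal update rules $c'_p := c_p\cup\{q\}$ etc.\ prescribe, and that non-participants stay fixed. Both follow directly from clause (2) of Definition~\ref{definition:vmat}, which is built into the closure.
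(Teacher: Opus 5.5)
Your proof is correct and takes the same route as the paper's: induction along the run, with the empty initial configuration as the base case, and the symmetric updates of Befriend and Unfriend preserving the biconditional. You are in fact more careful than the paper's proof, which ignores change-volition steps and dismisses mixed pairs $\{a,x\}$ by saying the local states are unchanged (not literally true for $a$); you handle both cases explicitly.
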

\begin{proof}
By induction on the length of the run $r=c_0,c_1,\ldots,c_n$.  In the initial configuration $c_0 = \emptyset^P$ the biconditional holds vacuously.
Assume the lemma holds for $c_n$, and consider the transition $c_n\rightarrow c_{n+1}$.  It can be either Befriend or Unfriend for some pair $\{p,q\}$; both modify $c_p$ and $c_q$ symmetrically---Befriend adds $q$ to $c_p$ and $p$ to $c_q$, Unfriend removes them---preserving the biconditional.  For all other pairs $\{r,s\}$, the local states are unchanged.
\end{proof}

We note that each configuration $c$ in a safe run $r$ of $SG$ induces a graph with agents as vertices and an edge $p\leftrightarrow q$ when $q \in c_p$ (equivalently, by Lemma~\ref{lemma:friendship-safety}, when $p \in c_q$), and that the graphs induced by two consecutive configurations in $r$ differ by exactly one added or removed edge.

\mypara{Transaction equivalence} All befriend$(p,q)$ transactions---differing in the configurations in which they occur---form an equivalence class.  Similarly for unfriend$(p,q)$.

\mypara{Liveness}  Liveness applies to the befriend and unfriend classes (Definition~\ref{definition:liveness}).  Unfriending is guarded by either $p$ or $q$: once either person wills the class, the transaction becomes enabled and must eventually be taken.  Befriending is guarded by both $p$ and $q$: it becomes enabled only when both persons will the class, and must then eventually be taken.

\mypara{Grassroots}
\begin{corollary}\label{corollary:GSN}
The grassroots social graph $SG$ is volitionally grassroots.
\end{corollary}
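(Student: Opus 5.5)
We argue in three parts: obliviousness, interactivity, and the volitional condition on first interactions. Together with Definition~\ref{definition:grassroots}, these give that $SG$ is grassroots and then volitionally grassroots (Definition~\ref{definition:volitionally-grassroots}).

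\emph{Oblivious.} We apply Corollary~\ref{corollary:guarded-oblivious}. Every machine transaction of $SG$ is a befriend or unfriend over a pair $\{p,q\}$, so it has exactly two participants. Its volition-guarded versions are $(t,\{p,q\})$ for befriend and $(t,\{p\})$, $(t,\{q\})$ for unfriend, and each guard is nonempty. We also check the standing hypothesis of Definition~\ref{definition:vmts}: distinct machine transactions must have disjoint closures. This holds because in befriend and unfriend both participants change state ($q\notin c_p$ becomes $q\in c'_p$, and conversely), as remarked at the start of Section~\ref{section:platforms}. Hence $SG$ is oblivious.

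\emph{Interactive.} Given disjoint nonempty $P,P'$, pick $p\in P$ and $q\in P'$, and build a finite prefix of a run of $SG(P\cup P')$. First $p$ performs a change-volition adding $[\text{befriend}(p,q)]$, then $q$ does the same, then the befriend transaction fires. Afterwards every volitional state is empty again, since $[t]$ is removed on fulfilment. This last transition changes both $e|_P$ and $e|_{P'}$. Its restriction to $P$ is not a transition of $SG(P)$: the machine state of $p$ acquires $q\notin P$, so it is not even in $S(P)=2^P$. It is therefore an interaction. We then extend this prefix to a correct run, for instance by stopping, since a finite run is a computation. We must check that no class is enabled at the final configuration. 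This holds because all volitions are empty and every transaction of $SG$ has a nonempty guard. If the intended reading requires infinite runs, we instead append an infinite sequence of fair steps, such as repeated will/unfriend/befriend cycles scheduled round-robin. Every enabled class is then fulfilled promptly, since classes are only enabled right after both or one of the guards will them, and we fire them immediately.

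\emph{First interaction guarded on both sides.} This is the main obstacle, because unfriend is guarded by a single party. Take any safe run of $SG(P\cup P')$ and its first interaction $e\to e'$. The transition changes agents in both $P$ and $P'$, and every transaction has exactly two participants, so it is a befriend or unfriend between some $p\in P$ and $q\in P'$. We claim it cannot be an unfriend. Before this step no transition involved agents from both sides: any earlier transition changing both sides would itself be between a $P$-agent and a $P'$-agent, and hence an interaction, as argued just now for befriend. Unfriend between them would likewise make $p$'s machine state leave $2^P$ unless the precondition $q\in c_p$ already failed. Hence, by induction, for all earlier configurations $c^m_p\subseteq P$ for $p\in P$, and symmetrically for $P'$. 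So an unfriend$(p,q)$, which requires $q\in c^m_p$, is not enabled, and the first interaction is a befriend$(p,q)$. Its only volition-guarded version is $(t,\{p,q\})$, with $p\in P$ and $q\in P'$, as required.

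The delicate point is the base of that induction. We need that every transition touching both sides is an interaction, i.e.\ its restriction is not a $SG(P)$-transition, which holds because the new state falls outside $S(P)$ or the unfriend restriction would have a participant outside $P$. We also need that single-side transitions preserve $c^m_p\subseteq P$, which holds because a befriend within $P$ only adds members of $P$.
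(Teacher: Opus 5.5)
Your proposal is correct and follows essentially the same route as the paper. You get obliviousness from Corollary~\ref{corollary:guarded-oblivious}, and interactivity from the run in which $p$ and $q$ both will $[\mathrm{befriend}(p,q)]$ and the befriend is then taken; that run is correct because all volitions are empty at the end and every transaction is guarded. You rule out unfriend as the first interaction because the friendship it requires could only come from an earlier cross-group befriend. Your explicit checks usefully make precise steps the paper states only informally: that the restriction to $P$ leaves $S(P)=2^P$, so the befriend really is an interaction in the sense of Definition~\ref{definition:first-interaction}, and the invariant $c^m_p\subseteq P$ for $p\in P$ up to the first interaction.
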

\begin{proof}
\emph{Grassroots:}  Befriend and unfriend are guarded---befriend by $\{p,q\}$, unfriend by $p$ or by $q$---so every volition-guarded transaction of two or more participants has a nonempty guard, and by Corollary~\ref{corollary:guarded-oblivious}, $SG$ is oblivious.  For disjoint nonempty $P, P'$, take $p\in P$ and $q\in P'$ and the run $\hat{r}$ of $SG(P\cup P')$ in which $p$ and $q$ each will befriend and the befriend is then taken; the befriend changes the local states of both $p$ and $q$, so $\hat{r}$ is interactive, and $\hat{r}$ is correct, as the befriend fulfils $[\mathrm{befriend}(p,q)]$ in both volitions and no class is enabled at its end.  Thus $SG$ is oblivious and interactive, hence grassroots by Definition~\ref{definition:grassroots}.

\emph{Volitionally grassroots:}  By Definition~\ref{definition:first-interaction}, the first interaction of disjoint nonempty $P, P'$ in any safe interactive run changes the state of an agent in each group; in $SG$ it is therefore a befriend or an unfriend between an agent of $P$ and one of $P'$.  An unfriend of $p\in P$ and $q\in P'$ requires $q\in c_p$, an existing friendship that only an earlier befriend of $p$ and $q$---itself an interaction---can create; so the first interaction is not an unfriend.  It is therefore a befriend, guarded by both its participants, one in $P$ and one in $P'$.  Hence $SG$ is volitionally grassroots.
\end{proof}

\subsection{Grassroots Coins and Bonds}\label{subsection:bonds}

Grassroots coins~\cite{shapiro2024gc,lewis2023grassroots} are units of debt that can be issued and traded digitally by any person.  Each person's coins are backed by the goods and services they offer, priced in their own currency, with liquidity arising from mutual credit via coin exchange among persons that know and trust each other.  Grassroots bonds~\cite{shapiro2026bonds} extend grassroots coins with a maturity date, reframing grassroots coins---cash---as mature grassroots bonds.  Coin-for-bond redemption generalises coin-for-coin redemption, allowing the lending of liquid coins in exchange for interest-bearing future-maturity bonds.  Digital social contracts---voluntary agreements among persons, specified, fulfilled, and enforced digitally---can express the full gamut of financial instruments as the voluntary swap of grassroots bonds, including credit lines, loans, sale of debt, forward contracts, options, and escrow-based instruments~\cite{shapiro2026bonds}.

\begin{definition}[Grassroots Bonds]\label{definition:bonds}
A \temph{$p$-bond with maturity date $d$}, denoted \textcent$_{p,d}$, is a unit of debt issued by $p \in \Pi$ maturing at date $d\in \calN$.  We let $\calB(P) =\{$\textcent$_{p,d} : p\in P, d\in \calN\}$ denote the set of all grassroots bonds by agents $P\subset\Pi$.  Each agent $p$ maintains as its local state a pair $(c_p, d_p^*)$ where $c_p$ is a multiset of members of $\calB(P)$ (initially $\emptyset$) and $d_p^*\in \calN$ is the local current date (initially $0$); $p$ considers a bond \textcent$_{q,d}$ to be \temph{mature}, and refers to it as a \temph{$q$-coin} (denoted \textcent$_q$), iff $d\le d_p^*$.  There is no global date; agents may disagree on which bonds are mature.
\end{definition}

The grassroots bonds volition-guarded transactions are:
\begin{tcolorbox}[colback=gray!5!white,colframe=black!75!black,top=2pt,bottom=2pt]
\begin{enumerate}
    \item \textbf{Mint}: $c'_p := c_p \cup \text{\textcent}^k_{p,d}$, $k>0$, $d\in\calN$; $d_p^*$ unchanged.  Guarded by $p$.
    \item \textbf{Advance-date}: ${d_p^*}' > d_p^*$; $c_p$ unchanged.  Unguarded.
    \item \textbf{Voluntary swap}: $c'_p := (c_p\cup y)\setminus x$, $c'_q := (c_q\cup x)\setminus y$, provided $x\subseteq c_p$, $y\subseteq c_q$; $d_p^*$ and $d_q^*$ unchanged.  Guarded by $\{p,q\}$.
    \item \textbf{Pay}: $c'_p := c_p\setminus x$, $c'_q := c_q\cup x$, where $x\subseteq c_p$ is a set of $q$-coins (that is, bonds \textcent$_{q,d}$ with $d\le d_p^*$); $d_p^*$ and $d_q^*$ unchanged.  Guarded by $p$.
    \item \textbf{Redeem}: $c'_p := (c_p\cup y)\setminus x$, $c'_q := (c_q\cup x)\setminus y$, where $x = \{\text{\textcent}_{q,d'}\}\subseteq c_p$ with $d'\le d_p^*$, $y = \{\text{\textcent}_{r,d}\}\subseteq c_q$, $r\in P$, $d\in\calN$; $d_p^*$ and $d_q^*$ unchanged.  Guarded by $p$.
\end{enumerate}
\end{tcolorbox}
\noindent Minting, paying, and redeeming are guarded by the initiator; voluntary swap requires both parties to be willing; Advance-date is unguarded, since local time advances mechanically.  In redemption, the redeemer chooses any bond held by the coin's issuer---regardless of who issued the bond---generalising coin-for-coin redemption~\cite{shapiro2024gc} to coin-for-bond redemption~\cite{shapiro2026bonds}.

\begin{definition}[Grassroots Coins and Bonds]\label{definition:gcb}
The \temph{grassroots coins and bonds} $\mathit{GCB}$ is the protocol over the volition-guarded transactions above with local-states function mapping each $P\subset\Pi$ to the set of pairs $(c,d)$ where $c$ is a multiset of members of $\calB(P)$ and $d\in\calN$, and equivalence $\sim$ identifying Mint transactions by the same agent $p$ with the same $k$ and $d$, Advance-date transactions of the same agent, Swap transactions between the same pair exchanging the same multisets, Pay transactions from the same payer to the same payee transferring the same multiset, and Redeem transactions between the same pair exchanging the same bonds, per Definition~\ref{definition:protocol-transactions}.
\end{definition}

\mypara{Invariants}
\begin{lemma}[Conservation of Money]\label{lemma:conservation}
In any safe run $r$ of $\mathit{GCB}$, the $p$-bonds in any configuration $c\in r$ are exactly the $p$-bonds minted by $p$ in the prefix of the run ending in $c$.
\end{lemma}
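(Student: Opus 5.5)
The plan is an induction on the length of the prefix of $r$ ending in $c$, proving a multiset invariant slightly stronger than the statement. For each issuer $p$, let $M_p(c)$ be the multiset union over all agents $a\in P$ of the $p$-bonds in $c^m_a$ (the bond component $c_a$ of the machine state). Let $\mathit{Minted}_p(n)$ be the multiset union of the bonds $\text{\textcent}^k_{p,d}$ added by the Mint transactions of $p$ among the first $n$ transitions. The invariant is $M_p(c_n)=\mathit{Minted}_p(n)$ for every $p$ and $n$, which is the precise reading of ``exactly the $p$-bonds minted by $p$''. Counting multiplicity and maturity date, this says no $p$-bond is created except by $p$'s mint and none is ever destroyed.

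The base case is the initial configuration. By Definition~\ref{definition:protocol-transactions} every agent's machine state is $(\emptyset,0)$, so $M_p(c_0)=\emptyset=\mathit{Minted}_p(0)$.

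For the step, safety of $r$ and Definition~\ref{definition:vmts} say each transition $c_n\to c_{n+1}$ is one of two kinds. It may be a change-volition, which leaves all machine states unchanged. Otherwise it is a volitional machine transaction induced by one of the five volition-guarded transactions, and then all agents outside its participants are machine-stationary. The cases are as follows.
\begin{enumerate}
\item Change-volition and Advance-date leave every $c_a$ unchanged, so both sides of the invariant are unchanged.
\item Mint by $p$ adds $\text{\textcent}^k_{p,d}$ to $c_p$. This adds exactly that multiset to both $M_p$ and $\mathit{Minted}_p$, and changes neither side for any other issuer.
\item Voluntary swap, Pay and Redeem all have the form $c'_p=(c_p\cup y)\setminus x$ and $c'_q=(c_q\cup x)\setminus y$, with $y=\emptyset$ for Pay. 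Because the preconditions give $x\subseteq c_p$ and $y\subseteq c_q$, multiset difference is exact. The combined holding is therefore $c'_p\cup c'_q=c_p\cup c_q$, and this holds after restriction to $p'$-bonds for every issuer $p'$. No mint occurs, so both sides of the invariant are unchanged.
\end{enumerate}

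I expect the main obstacle to be the last case: the exchange transactions are written with multiset operations, and one must check that the containment preconditions make $\setminus$ an exact inverse of $\cup$. Without them, truncated subtraction could silently destroy bonds. Since $x\subseteq c_p$ we have $(c_p\cup y)\setminus x=(c_p\setminus x)\cup y$, and symmetrically for $q$; summing the two gives the claimed equality of combined holdings. A smaller point is that Redeem moves a bond $\text{\textcent}_{r,d}$ issued by a possibly third party $r$. This is harmless because the bond merely moves between agents and the invariant is stated per issuer. It is also worth noting that the result concerns safe runs of $\mathit{GCB}(P)$ for a fixed $P$, and liveness plays no role.
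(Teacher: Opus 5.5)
Your proposal is correct and follows the same argument as the paper. Both prove the invariant that, for each issuer, the multiset of that issuer's bonds held across all agents equals what the issuer has minted: Mint adds exactly the minted bonds, Voluntary swap, Pay and Redeem only move bonds between the two participants, and Advance-date leaves bond holdings alone. You make explicit what the paper leaves implicit, namely the induction over prefixes, the change-volition case, the machine-stationarity of non-participants, and the fact that the preconditions $x\subseteq c_p$, $y\subseteq c_q$ make the multiset differences exact.
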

\begin{proof}
Mint adds new $p$-bonds to $p$'s holdings.  Voluntary swap, Pay, and Redeem transfer bonds between two agents without creating or destroying them: $x$ moves from $p$ to $q$ and $y$ from $q$ to $p$, preserving the total multiset of bonds.  Advance-date changes only $d_p^*$ and leaves bonds unchanged.  Hence the multiset of $p$-bonds across all agents equals the multiset minted by $p$.
\end{proof}

\mypara{Transaction equivalence}
All Mint transactions by the same agent $p$ with the same $k$ and $d$ (differing only in the configurations in which they occur) form an equivalence class.  All Advance-date transactions of the same agent form an equivalence class.  All Swap transactions between the same pair $\{p,q\}$ exchanging the same multisets $x$ and $y$ form an equivalence class; likewise Pay transactions from $p$ to $q$ transferring the same $x$, and Redeem transactions between $p$ and $q$ exchanging the same $x$ and $y$.

\mypara{Liveness}  Mint is guarded by the minting agent; it becomes enabled when the agent wills the class, and must then eventually be taken.  Pay and Redeem are guarded by the initiator; once enabled, they must eventually be taken.  Voluntary swap is guarded by both participants; it becomes enabled only when both persons will the class, and must then eventually be taken.  Advance-date is unguarded and always enabled (since $d_p^*$ can always grow); hence in every correct run it is taken infinitely often for every agent, and $d_p^*$ grows without bound.

\mypara{Grassroots}
\begin{corollary}\label{corollary:GCB}
Grassroots Coins and Bonds are volitionally grassroots.
\end{corollary}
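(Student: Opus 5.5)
We prove the two parts of Definition~\ref{definition:volitionally-grassroots} in turn, following the proof of Corollary~\ref{corollary:GSN}: grassroots via Corollary~\ref{corollary:guarded-oblivious} plus an explicit interactive run, then an analysis of which transaction can be the first interaction.

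\emph{Oblivious.} The transactions with two or more participants are Voluntary swap, Pay and Redeem. Swap is guarded by $\{p,q\}$, and Pay and Redeem are guarded by the payer/redeemer $p$, so all have nonempty guards. Mint and Advance-date have a single participant, so the unguarded Advance-date is not constrained by the hypothesis. Corollary~\ref{corollary:guarded-oblivious} therefore gives obliviousness directly.

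\emph{Interactive.} The obstacle compared with $SG$ is that Advance-date is unguarded and always enabled, so every correct run is infinite. For disjoint nonempty $P,P'$, take $p\in P$ and $q\in P'$ and build the following run $\hat r$ of $\mathit{GCB}(P\cup P')$:
\begin{enumerate}
\item $p$ wills a Mint class with $k=1$ and some $d$, and that Mint is taken.
\item $p$ and $q$ both will the Swap class with $x=\{\text{\textcent}_{p,d}\}$ and $y=\emptyset$, and that Swap is taken.
\item Thereafter, Advance-date steps (each by $+1$) are taken round-robin over all agents of $P\cup P'$ forever.
\end{enumerate}
The Swap changes both $p$'s and $q$'s local states. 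Its $P'$-restriction puts a $p$-bond in $q$'s holdings, which is not a state in $S(P')$, so it is not a transition of $\calF(P')$. Hence the Swap is an interaction.

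For correctness: each willed class is fulfilled and removed from the volitional states when taken. From step~3 on, every volitional state is empty, so the only enabled classes are the Advance-date classes, and each is taken infinitely often. Hence $\hat r$ is live, as well as safe.

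\emph{Volitionally grassroots.} Consider any safe run of $\mathit{GCB}(P\cup P')$ that is interactive between $P$ and $P'$. Its first interaction changes the state of some agent in $P$ and some agent in $P'$. Unary transactions (Mint, Advance-date) change only one agent, so the first interaction is a Swap, Pay or Redeem between some $p\in P$ and $q\in P'$ (in either direction).

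The key invariant, proved by induction along the prefix before the first interaction, is that every agent of $P$ holds only bonds in $\calB(P)$, and symmetrically for $P'$. The induction holds because Mint adds only own-issued bonds, and every binary transaction in the prefix has both participants in the same group, since it is not an interaction. This is the analogue of Lemma~\ref{lemma:conservation} restricted to the prefix.

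Given the invariant, Pay from $p$ to $q$ needs $x\subseteq c_p$ to consist of $q$-coins. Thus $x=\emptyset$, so the Pay changes neither state and cannot be an interaction. Redeem from $p$ needs a $q$-bond in $c_p$, which is impossible. The symmetric cases, with the roles of $P$ and $P'$ exchanged, are handled the same way. The first interaction is therefore a Voluntary swap, guarded by $\{p,q\}$, which meets $P$ and $P'$.

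The main point needing care is this pre-interaction invariant, together with the degenerate cases with empty $x$ or $y$. An empty transfer either leaves both states unchanged, so it is not an interaction, or, when both $x$ and $y$ are empty, it is excluded outright because transactions require $c\ne c'$.
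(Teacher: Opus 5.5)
Your proof is correct and takes essentially the same route as the paper. Obliviousness comes from Corollary~\ref{corollary:guarded-oblivious}; interactivity from an explicit infinite run with a Mint, a cross-group Voluntary swap and Advance-date of every agent taken infinitely often; and Pay and Redeem are ruled out as a first interaction because no agent can hold the other group's coins before a cross-group transfer. Your prefix invariant (agents of $P$ hold only bonds in $\calB(P)$ until the first interaction) is a self-contained form of the paper's appeal to Conservation of Money (Lemma~\ref{lemma:conservation}), and your check that the Swap's $P'$-restriction is not a transition of $\calF(P')$ spells out a step the paper leaves implicit.
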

\begin{proof}
\emph{Grassroots:}  Mint, Voluntary swap, Pay, and Redeem are guarded (by $p$; by $\{p,q\}$; by $p$; by $p$), and Advance-date is unary; so every volition-guarded transaction of two or more participants has a nonempty guard, and by Corollary~\ref{corollary:guarded-oblivious}, $\mathit{GCB}$ is oblivious.  For disjoint nonempty $P, P'$, take $p\in P$ and $q\in P'$ and an infinite run $\hat{r}$ of $\mathit{GCB}(P\cup P')$ whose prefix has $p$ mint $p$-coins, $q$ mint $q$-coins, and $p$ and $q$ both will Voluntary swap and execute it (exchanging $p$-coins for $q$-coins), and whose tail takes Advance-date of every agent infinitely often; the swap changes the local states of both $p$ and $q$, so $\hat{r}$ is interactive, and $\hat{r}$ is correct, as Mint and Voluntary swap are fulfilled on execution and never re-willed, and Advance-date is always enabled but taken infinitely often for every agent.  Thus $\mathit{GCB}$ is oblivious and interactive, hence grassroots by Definition~\ref{definition:grassroots}.

\emph{Volitionally grassroots:}  By Definition~\ref{definition:first-interaction}, the first interaction of disjoint nonempty $P, P'$ in any safe interactive run changes the state of an agent in each group; in $\mathit{GCB}$ it is therefore a Voluntary swap, Pay, or Redeem between an agent of $P$ and one of $P'$.  A Pay or Redeem in which $p\in P$ acts towards $q\in P'$ requires $p$ to hold $q$-coins; by Conservation of Money (Lemma~\ref{lemma:conservation}) these are minted by $q$, so they reach $p$ only by an earlier transfer between the two groups---itself an interaction.  So the first interaction is neither a Pay nor a Redeem.  It is therefore a Voluntary swap, guarded by both its participants, one in $P$ and one in $P'$.  Hence $\mathit{GCB}$ is volitionally grassroots.
\end{proof}

\section{Related Work}\label{section:related-work}

\mypara{Atomic transactions}
This work extends the notion of multiagent atomic transactions of~\cite{shapiro2025atomic} with volitions.
Atomic transactions have been investigated early in distributed computing, mostly in the context of database systems~\cite{lampson1981chapter,lynch1993atomic,lynch1988theory}.  Most research since and until today focuses on their efficient and robust implementation~\cite{bravo2019reconfigurable,chockler2021multi}.  The integration of atomic transactions in programming languages has also been explored~\cite{borgstrom2009compositional}.
In terms of formal models of concurrency, the extension of CCS with atomic transactions has been investigated in the past~\cite{acciai2007concurrent,de2010communicating,de2010liveness}, but without follow-on research, so it seems.  While transition systems have been the bedrock of abstract models of computation since the Turing machine, 
we are not aware of previous attempts to explore atomic transactions within their context.

\mypara{Formal models of persons in concurrent systems}
The formal methods tradition has a long lineage of modelling human agents as sources of nondeterminism alongside deterministic machines, but---to the best of our knowledge---without decomposing an agent into person and machine as components of its state.  A detailed survey appears in Appendix~\ref{appendix:persons}; we summarise the key points here.

Turing's choice machines~\cite{turing1936computable} introduced the person as an external operator making free choices at designated states.  Hoare's CSP~\cite{hoare1985communicating} provides process-algebraic encoding via external choice ($\Box$), but models what the person \emph{does}, not what the person \emph{is willing to do}.  Lynch and Tuttle's I/O automata~\cite{lynch1989introduction} model the environment (potentially human) via input-enabling.  Back and von Wright's angelic/demonic nondeterminism~\cite{back1998refinement} is the closest precursor to the volition/obligation distinction: angelic choices model cooperative behaviour, demonic choices model adversarial behaviour.  The distinction from the present work is that angelic nondeterminism is a point-of-choice semantics---a choice is resolved locally at each transition---whereas volitions are persistent, inspectable state, so a guard condition reads the agent's current willing, not a single local resolution.  Game structures~\cite{alur2002alternating,kupferman2001module,abramsky2000full}, ceremony analysis~\cite{ellison2007ceremony}, and electronic institutions~\cite{esteva2001formal,artikis2009specifying} treat agents as symmetric players or norm-governed entities, but none decompose an agent into person and machine components.

Across all traditions, the person is modelled as environment, opponent, error source, or unconstrained nondeterministic process---but always as an entity \emph{external} to the agent, never as a formal component of the agent's state.  The present work, to the best of our knowledge, is the first to decompose each agent's state into a machine state and a volitional state within a multiagent transition system, with machine transactions conditioned on the volitions of the agents' persons, and with liveness arising from the interplay of personal volitions and machine obligations rather than from a designated set of live transitions.

\section{Conclusion and Future Work}\label{section:conclusion}

We have presented volition-guarded multiagent atomic transactions---a formal foundation for describing systems consisting of people operating machines---in which each agent's state decomposes into a machine state and a volitional state, and machine transactions are guarded by their people's volitions.  We have developed the mathematical machinery needed to express the safety and liveness of grassroots platforms thus specified, and demonstrated the framework on two grassroots platforms: social networks and coins and bonds.  We have provided a simpler definition of grassroots that better captures the informal notion and excludes systems based on shared global data structures.

The framework extends naturally to platforms with richer guard structures---such as grassroots federations, in which transactions are guarded by supermajorities or assemblies of community members rather than by all participants---deferred to future work.

The original framework~\cite{shapiro2021multiagent} also considered faulty computations and fault-tolerant implementations.  We have adapted its notions of implementation and fault-resilient implementation to the present setting; establishing fault-tolerant implementations of the specifications presented here is left to future work.

\bibliography{bib}

@article{shapiro2026volitional,
  author    = {Andy Lewis-Pye and Ehud Shapiro},
  title     = {Volitional Multiagent Atomic Transactions: Describing People and their Machines},
  year      = {2026},
  note      = {Submitted, arXiv XXXX.XXXXX}
}

@article{turing1936computable,
  author    = {Alan M. Turing},
  title     = {On Computable Numbers, with an Application to the {E}ntscheidungsproblem},
  journal   = {Proceedings of the London Mathematical Society},
  volume    = {s2-42},
  number    = {1},
  pages     = {230--265},
  year      = {1936}
}

@article{turing1939systems,
  author    = {Alan M. Turing},
  title     = {Systems of Logic Based on Ordinals},
  journal   = {Proceedings of the London Mathematical Society},
  volume    = {s2-45},
  number    = {1},
  pages     = {161--228},
  year      = {1939}
}

@book{milner1980calculus,
  author    = {Robin Milner},
  title     = {A Calculus of Communicating Systems},
  publisher = {Springer},
  series    = {LNCS},
  volume    = {92},
  year      = {1980}
}

@book{milner2009space,
  author    = {Robin Milner},
  title     = {The Space and Motion of Communicating Agents},
  publisher = {Cambridge University Press},
  year      = {2009}
}

@techreport{lynch1989introduction,
  author      = {Nancy A. Lynch and Mark R. Tuttle},
  title       = {An Introduction to Input/Output Automata},
  institution = {MIT Laboratory for Computer Science},
  year        = {1989},
  number      = {MIT/LCS/TM-373}
}

@article{lynch2003hybrid,
  author    = {Nancy Lynch and Roberto Segala and Frits Vaandrager},
  title     = {Hybrid {I/O} Automata},
  journal   = {Information and Computation},
  volume    = {185},
  number    = {1},
  pages     = {105--157},
  year      = {2003}
}

@inproceedings{harel1985development,
  author    = {David Harel and Amir Pnueli},
  title     = {On the Development of Reactive Systems},
  booktitle = {Logics and Models of Concurrent Systems},
  publisher = {Springer},
  series    = {NATO ASI Series},
  volume    = {13},
  pages     = {477--498},
  year      = {1985}
}

@book{back1998refinement,
  author    = {Ralph-Johan Back and Joakim von Wright},
  title     = {Refinement Calculus: A Systematic Introduction},
  publisher = {Springer},
  year      = {1998}
}

@article{kupferman2001module,
  author    = {Orna Kupferman and Moshe Y. Vardi and Pierre Wolper},
  title     = {Module Checking},
  journal   = {Information and Computation},
  volume    = {164},
  number    = {2},
  pages     = {322--344},
  year      = {2001}
}

@article{alur2002alternating,
  author    = {Rajeev Alur and Thomas A. Henzinger and Orna Kupferman},
  title     = {Alternating-Time Temporal Logic},
  journal   = {Journal of the ACM},
  volume    = {49},
  number    = {5},
  pages     = {672--713},
  year      = {2002}
}

@article{abramsky2000full,
  author    = {Samson Abramsky and Radha Jagadeesan and Pasquale Malacaria},
  title     = {Full Abstraction for {PCF}},
  journal   = {Information and Computation},
  volume    = {163},
  number    = {2},
  pages     = {409--470},
  year      = {2000}
}

@techreport{ellison2007ceremony,
  author      = {Carl Ellison},
  title       = {Ceremony Design and Analysis},
  institution = {IACR},
  number      = {2007/399},
  year        = {2007}
}

@article{bolton2013formally,
  author    = {Matthew L. Bolton and Ellen J. Bass and Radu I. Siminiceanu},
  title     = {Using Formal Verification to Evaluate Human-Automation Interaction: A Review},
  journal   = {IEEE Transactions on Systems, Man, and Cybernetics: Systems},
  volume    = {43},
  number    = {3},
  pages     = {488--503},
  year      = {2013}
}

@inproceedings{esteva2001formal,
  author    = {Marc Esteva and Juan A. Rodr{\'i}guez-Aguilar and Carles Sierra and Pere Garcia and Josep Llu{\'i}s Arcos},
  title     = {On the Formal Specification of Electronic Institutions},
  booktitle = {Agent-Mediated Electronic Commerce (AMEC)},
  publisher = {Springer},
  series    = {LNCS},
  volume    = {1991},
  pages     = {126--147},
  year      = {2001}
}

@article{artikis2009specifying,
  author    = {Alexander Artikis and Marek Sergot and Jeremy Pitt},
  title     = {Specifying Norm-Governed Computational Societies},
  journal   = {ACM Transactions on Computational Logic},
  volume    = {10},
  number    = {1},
  pages     = {1--42},
  year      = {2009}
}

@article{shapiro2026bonds,
  title={Grassroots Bonds: A Grassroots Foundation for Market Liquidity}, 
  author={Shapiro, Ehud},
journal={arXiv preprint arXiv:2603.13671},
  year={2026}
}

@article{shapiro2026types,
  title={Types for Grassroots Logic Programs},
  author={Shapiro, Ehud},
  journal={arXiv preprint arXiv:2601.17957},
  year={2026}
}

@article{shapiro2026implementing,
  title={Implementing Grassroots Logic Programs with Multiagent Transition Systems and AI},
  author={Shapiro, Ehud},
  journal={arXiv preprint arXiv:2602.06934},
  year={2026}
}

@book{hoare1985communicating,
  author    = {Hoare, C. A. R.},
  title     = {Communicating Sequential Processes},
  publisher = {Prentice-Hall},
  year      = {1985},
  isbn      = {0-13-153271-5}
}

@online{torrentfreak2021bittorrent,
  title = {{BitTorrent Turns 20: The File-Sharing Revolution Revisited}},
  author = {{TorrentFreak}},
  year = {2021},
  month = {7},
  day = {2},
  url = {https://torrentfreak.com/bittorrent-turns-20-the-file-sharing-revolution-revisited-210702/},
  urldate = {2025-01-21},
  note = {Contains Bram Cohen's original statement: "BitTorrent's customer is etree. Etree is a loose-knit community of people who distribute live concert recordings online"}
}

@inproceedings{maymounkov2002kademlia,
  title = {Kademlia: A Peer-to-peer Information System Based on the XOR Metric},
  author = {Maymounkov, Petar and Mazi{\`e}res, David},
  booktitle = {Peer-to-Peer Systems: First International Workshop, IPTPS 2002},
  pages = {53--65},
  year = {2002},
  month = {March},
  publisher = {Springer},
  address = {Cambridge, MA, USA},
  doi = {10.1007/3-540-45748-8_5}
}

@article{shapiro2025glp,
  title={GLP: A Grassroots, Multiagent, Concurrent, Logic Programming Language},
  author={Shapiro, Ehud},
  journal={arXiv preprint arXiv:2510.15747},
  year={2025}
}

@article{shapiro2025GF,
  title={Grassroots Federation: Fair Governance of Large-Scale, Decentralized, Sovereign Digital Communities},
  author={Shapiro, Ehud and Talmon, Nimrod},
  journal={Proc. of AAMAS'26; arXiv preprint arXiv:2505.02208},
  year={2025}
}

@article{keidar2025constitutional,
  title={Constitutional Consensus},
  author={Keidar, Idit and Lewis-Pye, Andrew and Shapiro, Ehud},
  journal={arXiv preprint arXiv:2505.19216},
  year={2025}
}

@inproceedings{shapiro2025atomic,
  title={Grassroots Platforms with Atomic Transactions: Social Graphs, Cryptocurrencies, and Democratic Federations},
  author={Shapiro, Ehud},
  booktitle={Proceedings of the 27th International Conference on Distributed Computing and Networking},
  pages={71--81},
  doi = {10.1145/3772290.3772309},
  note = {arXiv preprint arXiv:2502.11299},
  year={2026}
}

@article{benet2014ipfs,
  title={Ipfs-content addressed, versioned, p2p file system},
  author={Benet, Juan},
  journal={arXiv preprint arXiv:1407.3561},
  year={2014}
}

@inproceedings{raman2019challenges,
  title={Challenges in the decentralised web: The mastodon case},
  author={Raman, Aravindh and Joglekar, Sagar and Cristofaro, Emiliano De and Sastry, Nishanth and Tyson, Gareth},
  booktitle={Proceedings of the internet measurement conference},
  pages={217--229},
  year={2019}
}

@inproceedings{buchegger2009peerson,
  title={PeerSoN: P2P social networking: early experiences and insights},
  author={Buchegger, Sonja and Schi{\"o}berg, Doris and Vu, Le-Hung and Datta, Anwitaman},
  booktitle={Proceedings of the Second ACM EuroSys Workshop on Social Network Systems},
  pages={46--52},
  year={2009}
}

@inproceedings{chockler2007constructing,
  title={Constructing scalable overlays for pub-sub with many topics},
  author={Chockler, Gregory and Melamed, Roie and Tock, Yoav and Vitenberg, Roman},
  booktitle={Proceedings of the twenty-sixth annual ACM symposium on Principles of distributed computing},
  pages={109--118},
  year={2007}
}

@inproceedings{lynch1988theory,
  title={A theory of atomic transactions},
  author={Lynch, Nancy and Merritt, Michael and Weihl, William and Fekete, Alan},
  booktitle={ICDT'88: 2nd International Conference on Database Theory Bruges, Belgium, August 31--September 2, 1988 Proceedings 2},
  pages={41--71},
  year={1988},
  organization={Springer}
}

@book{lynch1993atomic,
  title={Atomic transactions: in concurrent and distributed systems},
  author={Lynch, Nancy A and Merritt, Michael},
  year={1993},
  publisher={Morgan Kaufmann Publishers Inc.}
}

@inproceedings{lampson1981chapter,
  title={Chapter 11. atomic transactions},
  author={Lampson, Butler W},
  booktitle={Distributed Systems—Architecture and Implementation: an Advanced Course},
  pages={246--265},
  year={1981},
  organization={Springer}
}

@inproceedings{borgstrom2009compositional,
  title={A compositional theory for STM Haskell},
  author={Borgstr{\"o}m, Johannes and Bhargavan, Karthikeyan and Gordon, Andrew D},
  booktitle={Proceedings of the 2nd ACM SIGPLAN Symposium on Haskell},
  pages={69--80},
  year={2009}
}

@inproceedings{de2010liveness,
  title={Liveness of communicating transactions},
  author={De Vries, Edsko and Koutavas, Vasileios and Hennessy, Matthew},
  booktitle={Asian Symposium on Programming Languages and Systems},
  pages={392--407},
  year={2010},
  organization={Springer}
}

@inproceedings{de2010communicating,
  title={Communicating transactions},
  author={de Vries, Edsko and Koutavas, Vasileios and Hennessy, Matthew},
  booktitle={International Conference on Concurrency Theory},
  pages={569--583},
  year={2010},
  organization={Springer}
}

@inproceedings{acciai2007concurrent,
  title={A concurrent calculus with atomic transactions},
  author={Acciai, Lucia and Boreale, Michele and Dal Zilio, Silvano},
  booktitle={European Symposium on Programming},
  pages={48--63},
  year={2007},
  organization={Springer}
}

@article{chockler2021multi,
  title={Multi-shot distributed transaction commit},
  author={Chockler, Gregory and Gotsman, Alexey},
  journal={Distributed Computing},
  volume={34},
  pages={301--318},
  year={2021},
  publisher={Springer}
}

@inproceedings{bravo2019reconfigurable,
  title={Reconfigurable atomic transaction commit},
  author={Bravo, Manuel and Gotsman, Alexey},
  booktitle={Proceedings of the 2019 ACM Symposium on Principles of Distributed Computing},
  pages={399--408},
  year={2019}
}

@inproceedings{kermarrec2020gossiping,
  title={Gossiping with append-only logs in secure-Scuttlebutt},
  author={Kermarrec, Anne-Marie and Lavoie, Erick and Tschudin, Christian},
  booktitle={Proceedings of the 1st international workshop on distributed infrastructure for common good},
  pages={19--24},
  year={2020}
}

@inproceedings{tarr2019secure,
  title={Secure scuttlebutt: An identity-centric protocol for subjective and decentralized applications},
  author={Tarr, Dominic and Lavoie, Erick and Meyer, Aljoscha and Tschudin, Christian},
  booktitle={Proceedings of the 6th ACM conference on information-centric networking},
  pages={1--11},
  year={2019}
}

@article{halpern2024federated,
  title={Federated Assemblies},
  author={Halpern, Daniel and Procaccia, Ariel D and Shapiro, Ehud and Talmon, Nimrod},
  booktitle={AAAI '25},
  journal={Proc AAAI 2025; arXiv preprint arXiv:2405.19129},
  year={2024}
}

@article{lewis2023grassroots,
  title={Grassroots Flash: A Payment System for Grassroots Cryptocurrencies},
  author={Lewis-Pye, Andrew and Naor, Oded and Shapiro, Ehud},
  journal={arXiv preprint arXiv:2309.13191},
  year={2023}
}

@inproceedings{shapiro2023gsn, 
author={Shapiro, Ehud},
title={Grassroots Social Networking: Serverless, Permissionless Protocols for Twitter/LinkedIn/WhatsApp},
year = {2023},
isbn = {979-8-4007-0225-9/23/09},
publisher = {Association for Computing Machinery},
doi = {10.1145/3599696.3612898},
booktitle = {OASIS ’23},
location = {Rome, Italy},
}

@article{shapiro2024gc,
 title={Grassroots Currencies: Foundations for Grassroots Digital Economies},
  author={Shapiro, Ehud},
  journal={arXiv preprint arXiv:2202.05619},
  year={2024}
}

@inproceedings{shapiro2023grassrootsBA, 
author={Shapiro, Ehud},
 title={Grassroots Distributed Systems: Concept, Examples, Implementation and Applications (Brief Announcement)
},year = {2023},
publisher = {LIPICS},
booktitle = {37th International Symposium on Distributed Computing (DISC 2023). (Extended version: arXiv:2301.04391)},
address = {Italy},
notes={Extended version: arXiv preprint arXiv:2301.04391},
pages = {47:1, 47:7}
}

@article{shapiro2021multiagent,
  title={Multiagent Transition Systems: Protocol-Stack Mathematics for Distributed Computing},
  author={Shapiro, Ehud},
  journal={arXiv preprint arXiv:2112.13650},
  year={2021}
}

@inproceedings{shahaf2020genuine,
  title={Genuine Personal Identifiers and Mutual Sureties for Sybil-Resilient Community Growth},
  author={Shahaf, Gal and Shapiro, Ehud and Talmon, Nimrod},
  booktitle={International Conference on Social Informatics},
  pages={320--332},
  year={2020},
  organization={Springer},
  publisher={Springer},
  address={EU}
}

@inproceedings{chockler2007spidercast,
  title={Spidercast: a scalable interest-aware overlay for topic-based pub/sub communication},
  author={Chockler, Gregory and Melamed, Roie and Tock, Yoav and Vitenberg, Roman},
  booktitle={Proceedings of the 2007 inaugural international conference on Distributed event-based systems},
  pages={14--25},
  year={2007}
}

@inproceedings{douceur2002sybil,
  title={The sybil attack},
  publisher={Springer},
  address={EU},
  author={Douceur, John R},
  booktitle={Proceedings of the international workshop on peer-to-peer systems},
  pages={251--260},
  year={2002},
}

@misc{bitcoin,
  title={A peer-to-peer electronic cash system},
  author={Nakamoto, Satoshi and Bitcoin, A},
  journal={Bitcoin.--URL: https://bitcoin. org/bitcoin. pdf},
  volume={4},
  year={2008}
}

@article{levine2006survey,
  title={A survey of solutions to the sybil attack},
  author={Levine, Brian Neil and Shields, Clay and Margolin, N Boris},
  journal={University of Massachusetts Amherst},
  volume={7},
  pages={224},
  year={2006}
}

@inproceedings{alvisi2013sok,
  title={Sok: The evolution of sybil defense via social networks},
  author={Alvisi, Lorenzo and Clement, Allen and Epasto, Alessandro and Lattanzi, Silvio and Panconesi, Alessandro},
  booktitle={Proceedings of the 2013 {IEEE} symposium on security and privacy},
  pages={382--396},
  year={2013},
  organization={IEEE}
}

@misc{proof,
  title={Proof of Stake FAQ, Ethereum Wiki},
  author={Nakamoto, Satoshi and Bitcoin, A},
  howpublished = { \\ \MYhref{https://eth.wiki/en/concepts/proof-of-stake-faqs}{https://eth.wiki/en/concepts/proof-of-stake-faqs}},
  year={2019}
}

@misc{Cryptocurrencies,
  title={All Cryptocurrencies, CoinMarketCap},
  howpublished = { \MYhref{https://coinmarketcap.com/all/views/all/}{https://coinmarketcap.com/all/views/all/}},
  year={2020}
}

\mypara{Acknowledgements}
We thank Idit Keidar and Nimrod Talmon for our discussions and their feedback on this and related topics.

\appendix
\section{Systems Based on Shared Global Data Structures Are Not Grassroots}\label{appendix:not-grassroots}

\begin{proposition}\label{proposition:bitcoin}
Bitcoin is not grassroots.
\end{proposition}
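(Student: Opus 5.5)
To prove that Bitcoin is not grassroots, I will show that it is not oblivious in the sense of Definition~\ref{definition:grassroots}. It then fails to be grassroots regardless of interactivity.

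The first step is to fix an abstract protocol $\calF_{BTC}$ modelling Bitcoin as a family of multiagent transition systems. Each agent's local state holds its copy of a blockchain (a sequence of blocks, each referring to its predecessor), and all agents share a single fixed genesis block $b_0$. The correct transitions are of two kinds. In the first, an agent mines a block extending its current chain. In the second, an agent adopts a longer valid chain received from another agent, which is a binary transaction under the longest-chain rule. The liveness classes should capture the intended behaviour: whenever a participant holds a strictly longer valid chain than another agent, the adoption class is enabled and must eventually be taken. Equivalently, a correct run must eventually have all agents agree on ever-growing common prefixes (eventual consistency on a single global ledger). I take this agreement requirement to be part of what a correct run of $\calF_{BTC}(P)$ means, since it is precisely the defining guarantee of a replicated ledger.

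Next, take disjoint nonempty $P,P'$, and take correct runs $r$ of $\calF_{BTC}(P)$ and $r'$ of $\calF_{BTC}(P')$ in which $P$ mines a chain $b_0,b_1,\ldots$ and $P'$ independently mines $b_0,b'_1,\ldots$, with $b_1\ne b'_1$. Both runs exist and are correct, because within each group all agents converge on that group's chain, and both runs are infinite with chains growing without bound. Now consider any interleaving $e$ of $r$ and $r'$. Every $P$-agent ends up on a chain through $b_1$ and every $P'$-agent on a chain through $b'_1$, and no agent in either group ever switches. Meanwhile, in $\calF_{BTC}(P\cup P')$, at infinitely many points (in fact in every state of some suffix) a $P$-agent and a $P'$-agent hold chains of different lengths, or conflicting chains that never resolve. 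So either the cross-group adoption class is enabled throughout a suffix without being taken, violating liveness, or the consistency requirement fails. Either way $e$ is not a correct run of $\calF_{BTC}(P\cup P')$, which contradicts obliviousness.

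I expect the main obstacle to be making the liveness violation rigorous. The two chains may have equal lengths at particular moments, so no single adoption class is necessarily enabled continuously, and chain adoption classes must also be defined sensibly. My first approach is to make the adoption class depend only on the pair of agents (``$q$ adopts $p$'s chain''), so that it is enabled whenever $p$'s chain is strictly longer. I would then choose the two runs so that $P$'s chain is always strictly longer than $P'$'s at every point of any interleaving, for example by letting $P$ mine two blocks per block mined by $P'$ after an initial lead. Since an interleaving only reorders, I will instead argue via the eventual agreement property, which is robust to ordering: the fork at height $1$ persists forever in $e$. Finally, I will remark that the same argument applies verbatim to DHTs and IPFS, whose correctness likewise requires a single global structure.
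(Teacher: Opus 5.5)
Your overall strategy is the paper's: refute obliviousness by exhibiting an interleaving in which a cross-group obligation stays enabled but is never met. The argument you finally settle on, however, steps outside the framework. In Definition~\ref{definition:ts} a correct run is one that is safe and live. Liveness means only that no class of $\sim$ stays enabled along a suffix without a member being taken. There is no slot for an ``eventual agreement'' clause, so declaring it part of correctness changes the notion of obliviousness you are refuting.

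Nor is eventual agreement equivalent to your class-based liveness, as you assert. With adoption enabled only toward a strictly longer chain, two equal-length forks grown in lockstep trigger no class. For your runs $r,r'$, in which both chains grow without bound, take an interleaving that alternately lets each group overtake the other. It disables every cross-group adoption class infinitely often, so it is safe and live, hence correct in $\calF_{BTC}(P\cup P')$, even though the fork at height~$1$ persists. So the step ``either way $e$ is not a correct run'' is false for arbitrary interleavings $e$.

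The repair is already in your first approach once the quantifiers are read correctly. Obliviousness asserts that \emph{every} interleaving is correct, so you need only \emph{one} bad interleaving, not a property holding ``at every point of any interleaving''. Lengths are non-decreasing and $q$'s chain grows without bound in $r$. So before each $P'$-step, advance $r$ until $q$'s chain is strictly longer than $q'$'s will be after that step. From then on ``$q'$ adopts $q$'s chain'' is enabled in every state and never taken, since every step is internal to one group.

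The paper avoids the tie problem altogether by using a propagation class: a block mined by $q$ must eventually reach $q'$. Its enabledness is monotone, so once $q$ mines it stays enabled in \emph{every} interleaving. The paper also makes explicit why the cross-group transition exists in $\calF(P\cup P')$ at all. $q$ and $q'$ are protocol-specified bootnodes that know each other, which is exactly where Bitcoin's global resource enters. Your model silently assumes that any pair of agents can exchange chains; you should state that assumption and justify it.
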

\begin{proof}
We argue that Bitcoin is not oblivious.  The Bitcoin protocol specifies a set of bootnodes $B$, which know of and communicate with each other, each holding an initial chain consisting of the genesis block, and its liveness condition requires that blocks mined by correct agents are eventually propagated to all correct agents.  Consider disjoint $P, P' \subset \Pi$ such that $P$ and $P'$ have disjoint members of $B$, say $q\in P$ and $q'\in P'$.  In an interleaving of a correct run of~$P$ and a correct run of~$P'$, each step is either a $P$-step or a $P'$-step, so $q$ and $q'$ develop independent chains and no block propagation between them ever occurs.  Once $q$ mines a block, the propagation-to-$q'$ class becomes enabled and remains so indefinitely with no member ever taken; the interleaving is therefore not live, hence not a correct run of $P\cup P'$, and Bitcoin is not oblivious.
\end{proof}

\mypara{Distributed Hash Tables and IPFS} The same argument applies to systems based on distributed hash tables~\cite{maymounkov2002kademlia} and distributed file systems such as IPFS~\cite{benet2014ipfs}.  In a DHT, a lookup by a member of $P$ may require routing through a member of $P'$, so that a lookup transition that succeeds when $P$ runs alone does not correspond to a valid transition in the combined system where routing tables reflect both groups.  More precisely: in an interleaving of correct runs of $P$ and $P'$, each group builds its own routing table over its own members.  A lookup in the combined system $P\cup P'$ must route through the combined table, but in the interleaving no routing entry connecting the two groups is ever established, so lookups that should succeed in $P\cup P'$ (by routing through $P'$) instead fail or route incorrectly.  The liveness condition requiring that lookups for keys stored by correct agents eventually succeed is thus violated.  The argument for IPFS is analogous, as it relies on a DHT (specifically, a Kademlia-based DHT~\cite{maymounkov2002kademlia}) for content discovery.  Trackerless BitTorrent~\cite{torrentfreak2021bittorrent} relies on a Kademlia-based DHT for peer discovery, so the argument applies to it as well.

\section{Implementation}\label{section:implementation}

The specifications presented here provide the formal foundation for grassroots platforms. Here, we discuss how these platforms can be implemented, focusing on the role of GLP as the implementation language.

\mypara{GLP as implementation language}  Grassroots Logic Programs (GLP)~\cite{shapiro2025glp} is designed for implementing grassroots platforms on networked smartphones.  A correct multiagent implementation of GLP (madGLP)~\cite{shapiro2026implementing} has been developed and proven to correctly implement maGLP, with the additional result that correct and complete implementations preserve the grassroots property.  Thus, any grassroots platform that can be specified and proven grassroots at the transaction level, and correctly implemented in GLP, is guaranteed to remain grassroots at the implementation level.

\mypara{AI-derived implementations}  The mathematical foundations presented here and in companion papers have been used by AI to derive working implementations: (\ia)~a workstation-based implementation of concurrent GLP and a smartphone-based multiagent implementation of GLP, both in Dart, derived from the formal operational semantics~\cite{shapiro2026implementing}; (\ib)~a GLP implementation of the grassroots social graph, child-safe social networking with parental-consent-based befriending and group membership~\cite{shapiro2026volitional}; (\ic)~a GLP implementation of grassroots bonds, including a running six-agent village market scenario exercising symmetric and asymmetric credit, payments, redemption, escrow, and sale of debt~\cite{shapiro2026bonds}; and (\iiv)~a moded type system for GLP, implemented in Dart from a mathematical specification~\cite{shapiro2026types}.

\mypara{Enforcement}  Enforcement of the digital social contract---ensuring that participants cannot deviate from the protocol as programmed---is achieved via mutual attestation among the participants' machines, as described in companion work on secure GLP.

\mypara{Binary transactions} A standard way to realise binary transactions using unary transition systems is for one agent, say $p$, to \textsc{offer} the transaction to $q$, who may respond with \textsc{accept},
upon which $p$ may respond with \textsc{commit}, upon which the offered transaction is deemed to have been executed, or \textsc{abort}.
Agent $p$ may also issue \textsc{abort} before or after receiving any response from $q$ to its offer, provided $p$ has not previously issued \textsc{commit}. 

A challenge in this implementation is that a faulty $p$ may fail to either \textsc{commit} or \textsc{abort} following an \textsc{accept} by $q$, leaving $q$ in limbo, at least in regards to this transaction.  Solutions to this are a subject of future work.

For now, we note that, worst case, a friendship offer by $p$ accepted by $q$ would remain in limbo.  If it is committed by $p$ at some later point, which is not convenient to $q$, then $q$ can promptly unfriend $p$, with little or no harm done.
In the case of grassroots bonds, a swap transaction in limbo may tie bonds offered by $q$, which may or may not be harmful to $q$ (not harmful if these are $q$-bonds, which $q$ may mint as it pleases; or $p$-bonds that $q$ tries to redeem, and if $p$ is non-responsive it might indicate that $p$-bonds are not worth much anyhow).

\section{Formal Models of Persons in Concurrent Systems}\label{appendix:persons}

The formal methods tradition has a long lineage of modelling human agents as sources of nondeterminism alongside deterministic machines, but---to the best of our knowledge---without decomposing an agent into person and machine as components of its state.

\mypara{Turing's choice machines}
Before defining what we now call Turing machines (automatic machines, or a-machines), Turing~\cite{turing1936computable} introduced \emph{choice machines} (c-machines), ``whose motion is only partially determined by the configuration''---at designated states, the machine ``cannot go on until some arbitrary choice has been made by an external operator.''  The external operator is a person who freely chooses between alternatives; the sequence of choices determines which computation unfolds.  Turing immediately set c-machines aside, showing that any c-machine computation can be enumerated by an a-machine.  His 1939 oracle machines~\cite{turing1939systems} extend this further: the oracle ``cannot be a machine'' and provides answers the computation cannot derive internally.  Both formalisms model the person as \emph{external} to the machine, providing input at designated points.

\mypara{Process algebras}
Hoare's CSP~\cite{hoare1985communicating} provides the cleanest process-algebraic encoding of human choice.  External choice ($\Box$) offers the environment---potentially a person---a selection among initial events, while internal choice ($\sqcap$) is resolved by the system.  However, the person remains \emph{outside} the system boundary: CSP models what the person \emph{does}, not what the person \emph{is willing to do}.
Milner's CCS~\cite{milner1980calculus} uses a single summation operator without formally separating internal from external nondeterminism at the syntactic level.  In his later work on bigraphs~\cite{milner2009space}, Milner makes the scope explicit: agents ``can be artificial, as in computing systems\ldots\ or they can be natural, e.g.\ communicating humans.''  Both CSP and CCS model agents uniformly---there is no formal distinction between a person and a machine within the same agent.

\mypara{I/O automata and reactive systems}
Lynch and Tuttle's I/O automata~\cite{lynch1989introduction} partition actions into input (environment-controlled), output (automaton-controlled), and internal actions.  The key property is \emph{input-enabling}: an automaton cannot block input actions, so the environment---potentially a human operator---can act at any moment.  The Hybrid I/O Automata extension~\cite{lynch2003hybrid} explicitly states that HIOAs are ``intended to model all components of hybrid systems, including\ldots\ humans.''  The person, however, is part of the environment, not a component of the automaton's state.
Harel and Pnueli's reactive systems paradigm~\cite{harel1985development} draws the foundational dichotomy between transformational systems (batch, terminating) and reactive systems (ongoing interaction with environment).  The system is deterministic; all nondeterminism is attributed to the environment.  This paradigm was explicitly motivated by human-machine interaction, yet the formalism treats the human as environment rather than as a component of the system.

\mypara{Angelic and demonic nondeterminism}
The distinction between angelic and demonic nondeterminism, developed by Back and von Wright~\cite{back1998refinement} in the refinement calculus, provides a semantic treatment relevant to the person/machine boundary.  Demonic nondeterminism models adversarial environments (the worst-case choice is made); angelic nondeterminism models cooperative choices (the best-case choice is made).  This duality is the closest precursor to the volition/obligation distinction in the present work: a volitional transaction guarded by both parties (both must be willing) versus one guarded by either party (either can force it).  Two distinctions separate the frameworks.  First, the refinement calculus operates within a sequential program framework, not a multiagent transition system.  Second, angelic nondeterminism is a point-of-choice semantics: a choice is resolved locally at each transition, with no residue carried forward.  Volitions, in contrast, are persistent, inspectable state that accumulates across transitions; a guard condition reads an agent's record of willing over the history of the run, not a single local resolution.  This shift---from choice as a point-semantic primitive to choice as state---is what lets volitions be shared, compared, and reasoned about within the transition system, rather than external to it.

\mypara{Game structures and alternating-time temporal logic}
Module checking~\cite{kupferman2001module} models open systems with the environment fully adversarial.  Alternating-time temporal logic (ATL)~\cite{alur2002alternating} interprets formulas over concurrent game structures where multiple agents simultaneously choose actions.  Game semantics~\cite{abramsky2000full} models computation as dialogue between Proponent (program, following a deterministic strategy) and Opponent (environment, making free moves).  These frameworks treat agents as symmetric players but do not decompose a single agent into person and machine components.

\mypara{Ceremony analysis and human-interactive verification}
Ellison's ceremony analysis~\cite{ellison2007ceremony} extends security protocol analysis to include human participants as protocol nodes.  Bolton's Enhanced Operator Function Model (EOFM)~\cite{bolton2013formally} translates hierarchical human task models into state machines for model checking, with the human as the sole source of nondeterminism.  Both treat human nondeterminism as a source of error to be verified against, rather than as a source of legitimate volition to be formally recorded.

\mypara{Normative multiagent systems and electronic institutions}
Electronic institutions~\cite{esteva2001formal} model multiagent interaction as dialogical frameworks where human and software agents are treated uniformly as role-playing entities.  Normative multiagent systems~\cite{artikis2009specifying} use Event Calculus to specify societies where agents ``may fail to, or even choose not to, conform to the specifications.''  These approaches model norms that constrain agents, but do not decompose an agent's state into machine and volitional components, nor do they formalise the distinction between transactions requiring all parties to be willing and those that are obligatory once initiated.

\end{document}